\begin{document}
\title{FACT: Fast closed testing for exchangeable local tests}

\date{\today}
\author{Edgar Dobriban\footnote{Department of Statistics, The Wharton School, University of Pennsylvania. E-mail: \texttt{dobriban@wharton.upenn.edu}.}  }

\maketitle

\graphicspath{
{../Experiments/Experiment 1 - Fisher vs Bonferroni/}
{../Experiments/Experiment 2 - Additive/}
{../Experiments/Experiment 3 - Mixture/}
{../Experiments/Experiment 3b - Mixture Ada/}
{../Experiments/Experiment 3c - Mixture Corr/}
{../Experiments/Experiment 4 - C4D/}
}

\abstract{
Multiple hypothesis testing problems arise naturally in science. In this paper, we introduce the new Fast Closed Testing (FACT) method for multiple testing, controlling the family-wise error rate. This error rate is state of the art in many important application areas, and is preferred to false discovery rate control for many reasons, including that it leads to stronger reproducibility. The closure principle rejects an individual hypothesis if all global nulls of subsets containing it are rejected using some test statistics. It takes exponential time in the worst case. When the tests are symmetric and monotone, our method is an exact algorithm for computing the closure, quadratic in the number of tests, and linear in the number of discoveries. Our framework generalizes most examples of closed testing such as Holm's and the Bonferroni method. 
As a special case of our method, we propose the Simes-higher criticism fusion test, which is powerful for detecting both a few strong signals, and also many moderate signals.}

\section{Introduction}

We often need to make multiple decisions, while ensuring that we do not make many mistakes. In the model problem of multiple testing, we have several hypotheses, for instance about the association between explanatory variables and an observed outcome. Based on the observed data, we want to discover the variables truly associated with the outcome, while controlling the erroneous discoveries. 

The classical error criterion to control in multiple testing is the family-wise error rate, which is the probability of any erroneous discoveries.
Family-wise error rate control requires that the probability of making any false positives is small, which is very appealing in critical application areas where mistakes are expensive.
While false discovery rate \citep{benjamini1995controlling} control is sometimes preferred, family-wise error rate control
remains the standard in high-stakes areas where more rigorous error control is desired, such as in 
end-stage
genome-wide association studies \citep{sham2014statistical}, neuroimaging \citep{eklund2016cluster}, as well as medical and pharmaceutical applications \citep{dmitrienko2009multiple}. 

The closure principle is a general framework for constructing multiple tests controlling the family-wise error rate \citep{marcus1976closed}. Suppose that we have individual null hypotheses $H_1,\ldots, H_n$, and that we want to control the family-wise error rate at level $\alpha$. For instance, $H_i$ may correspond to the null hypothesis that the $i$th explanatory variable is not associated with the outcome. 

In the closure principle, we start with valid level $\alpha$ local testing rules for all nulls that are intersections of subsets of individual null hypotheses. An intersection null $H_J$ states that all of the hypotheses $H_i$, $i \in J$ are null. In closed testing, we reject the individual null $H_i$ if, for all sets $J$ such that $i\in J$, the intersection null $H_J$ is rejected. To argue that we have an individual discovery, we must be able to see this discovery in all groups of individual hypotheses. This method controls family-wise error rate strongly, under any configuration of true and false individual null hypotheses \citep{marcus1976closed}.

Clearly, this method is computationally intractable in general, because it requires testing all $2^n-1$ intersection nulls, corresponding to all nonempty subsets of hypotheses. In practice, popular computationally efficient shortcuts are used in special cases. For instance, the important Holm's procedure \citep{holm1979simple}, used in tens of thousands of scientific studies, is the closure of Bonferroni's method, and Hommel's procedure \citep{hommel1988stagewise} is the closure of the Simes method  \citep{simes1986improved}. Hochberg's procedure \cite{hochberg1988sharper} is a simple method that is strictly more conservative than Hommel's procedure. Liu's method \citep{liu1996multiple} gives conditions under which the closure principle can be reduced to step-up or step-down methods.

However, these uses of the closed testing principle are limited, and they do not allow much flexibility. It would be desirable to choose local testing rules that can use information adaptively. If there are only few large effects, we should use a local testing rule that only looks at the smallest $p$-values, for instance Bonferroni's or the Simes method. If there are many moderately small effects, then we should use a local testing rule that uses many of the small $p$-values, like the chi-squared or higher criticism tests. There are currently no known computationally efficient methods for this. 

To broaden the methods available for multiple testing, in this paper we introduce Fast Closed Testing (FACT), a new computationally efficient framework for multiple testing based on the closure principle. Our algorithm allows the user to design powerful multiple testing architectures for specific applications. The key requirement on the problems where it can be used is the exchangeability of the hypotheses: we treat the hypotheses as if there was little or no distinction between them. This is a reasonable first order approximation in many applications. The use of prior information can be difficult to quantify and can bias results; see below for more discussion. Therefore, symmetry is often a good assumption even in the presence of some prior information. 

The closed testing principle is more than 40 years old. While there is a lot of related work, see Section \ref{rw2}, it appears that the FACT algorithm has not been reported before. We found this surprising, and closing this gap was our main motivation. 

\section{Fast closed testing}

 More formally, let $p_i$, $i=1,\ldots,n$ be the $p$-values for the individual null hypotheses $H_i$. The $p$-values are assumed to be uniform, or stochastically larger than uniform random variables whenever the null hypotheses $H_i$ are true. Depending on the context, we will need specific assumptions on the joint distribution of the $p$-values. These will range from no additional assumptions to joint independence.

Denote by $p_J$ the vector of $p$-values with indices in the set $J \subset \{1,\ldots,n\}$, and consider a local testing rule for the intersection null 
 $$H_J = \bigcap_{i\in J} H_i,$$ 
 with decision function 
 $$\Phi(p_J):[0,1]^J\to\{0,1\}.$$ The intersection null $H_J$ is rejected based on the $p$-values $p_J$ if $\Phi(p_J)=1$.

The closed testing method \citep{marcus1976closed} defines a multiple testing procedure for the individual null hypotheses $H_i$ based on the local testing rules. In words, we reject $H_i$ if, for all sets $J$ such that $i\in J$, the null $H_J$ is rejected. Formally, the closed testing decision rule $\Phi_c$ for the $i$th null is

$$\Phi_c(p_i) = \prod_{J: i\in J}\Phi(p_J).$$
Some of the null hypotheses $H_i$ are true, while some are false, but we do not know which ones. An intersection null $H_J$ is true if and only if all of the individual hypotheses $H_i$ with $i \in J$ are true. The local testing rules have level $\alpha$ if the probability of rejecting $H_J$ when it is true is at most $\alpha$. The  family-wise error rate of a multiple testing rule is the probability of rejecting any individual null hypothesis $H_i$ when it is true. The closed testing method controls the family-wise error rate, if each local testing rule has level $\alpha$, see the Supplementary Material for the well-known argument. 

To implement closed testing, we need to find the indices $i$ for which all parent subsets are rejected. In general, we would need to check each of the $2^{n-1}$ subsets containing an index. However, this can simplify under some conditions, as there can be many sets that we do not need to check. 

We assume that the problem has a global symmetry structure, meaning that we treat the hypotheses identically. In that case, it makes sense to use the same local testing rule $T_j$ for each subset of a fixed size $j$. Suppose also that the local testing rules satisfy the following two conditions:

\begin{assumption}[Monotonicity] If we reject the intersection null and decrease any $p$-value, we still reject it. If for all coordinates $i$, $p_i \le p'_i$, then 
$$\Phi (p_J) \ge \Phi (p'_J).$$
\end{assumption}

\begin{assumption}[Symmetry] Rejection only depends on the set of $p$-values, and not on the labels of their null indices. For any permutation $\pi$ of the indices $J$, 
$$\Phi (p_J) = \Phi\left\{p_{\pi(J)}\right\}.$$
\end{assumption}

Under these conditions only the sizes of the $p$-values, and not their indices, matter to choose the rejections. Moreover, smaller $p$-values are always better. This implies that some group of the smallest $p$-values will be rejected. 

\begin{algorithm}[!h]
\caption{Fast Closed Testing (FACT)} \label{fct}
\begin{tabbing}
   \enspace Input $p$-values $p_i$, $i = 1, \ldots, n$ \\
   \enspace Input local testing rules $T_k$ for testing subsets of size $k$\\ 
   \enspace Input desired family-wise error rate $\alpha$\\
   \enspace Sort the $p$-values: $p_{(1)} \le \ldots \le p_{(n)}$\\
   \enspace For $k = 1$ to $k=n$ \\
   \qquad For $j = k$ to $j=n$\\
  \qquad\qquad  If the null with $p$-values $p_{(k)},p_{(j+1)},p_{(j+2)},\ldots,p_{(n)}$ is not rejected\\ 
  \qquad\qquad using local testing rule $T_{n-j+1}$ go to next line 
  \\
\enspace Reject hypotheses corresponding to the $p$-values $p_{(1)},\ldots,p_{(k-1)}$
\end{tabbing}
\vspace*{-10pt}
\end{algorithm}

It remains to find this group. Intuitively, we should start with the smallest $p$-value, and check if the hardest subset of every given size containing it will be rejected. We we give an equivalent sequential algorithm relying on two  loops. We call this the Fast Closed Testing (FACT) algorithm, see Algorithm \ref{fct}.

In this algorithm, we make the following conventions. The indices in the loops run until $n+1$, but the $p$-value  $p_{(n+1)}$ is taken to be the empty set. This element is not included in the list of $p$-values to be tested. Moreover, the $p$-values $p_{(j+1)},\ldots,p_{(n)}$ are also taken to be the empty set when $j=n$. In that case they are not included in the list. Finally, when $k=n+1$ and $j=n$, the above conventions specify that the $p$-values to be tested are the empty set, so the list is empty. In this case, we make the convention that this set is always rejected. 

Our main result shows that the FACT algorithm controls the family-wise error rate, and characterizes the running time of the algorithm. See the Supplementary Material for the proof.

\begin{theorem}[Correctness of the FACT algorithm]
\label{thm_fct}
If the intersection null hypotheses are tested with monotone symmetric testing rules $T_j$ with level $\alpha$, using the same rule for all sets with the same size $j$, then the FACT algorithm provides exactly the same rejections as closed testing. Hence it controls the family-wise error rate at level $\alpha$. 

Moreover, suppose that applying the local testing rule $T_j$ to $j$ hypotheses takes linear time $O(j)$. Then, the FACT algorithm takes $O(sn^2)$ time, where $s$ is the number of significant discoveries, or rejections.
\end{theorem}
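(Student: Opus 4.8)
The plan is to derive two structural facts about closed testing under Symmetry and Monotonicity and then run an induction that matches the algorithm's decisions to closed testing one rejection at a time; the running time follows by a direct accounting. First I would prove a \emph{prefix} lemma: closed testing rejects exactly the hypotheses of the $m^*$ smallest $p$-values, $\{p_{(1)},\ldots,p_{(m^*)}\}$, for some threshold $m^*$; equivalently, if $H_{(i)}$ is rejected and $i'<i$ then $H_{(i')}$ is rejected. To see this, fix any $J'\ni(i')$. If $(i)\in J'$ then $\Phi(p_{J'})=1$ since $H_{(i)}$ is rejected. Otherwise I swap, setting $J=(J'\setminus\{(i')\})\cup\{(i)\}$; then $(i)\in J$ so $\Phi(p_J)=1$, and because $p_{J'}$ is obtained from $p_J$ by replacing $p_{(i)}$ with the smaller $p_{(i')}$, Symmetry and Monotonicity give $\Phi(p_{J'})\ge\Phi(p_J)=1$. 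Hence every parent set of $(i')$ is rejected, so $H_{(i')}$ is rejected.

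Second, a \emph{hardest set} observation: among all size-$m$ sets containing a fixed index $(k)$ with $m\le n-k+1$, the set $S_{k,n-m+1}=\{p_{(k)},p_{(n-m+2)},\ldots,p_{(n)}\}$ --- the index $(k)$ together with the $m-1$ largest $p$-values --- dominates, in the sense that its sorted $p$-value vector is coordinatewise largest. By Symmetry and Monotonicity, if $S_{k,n-m+1}$ is rejected then so is every size-$m$ set containing $(k)$. These dominating sets, for $m=1,\ldots,n-k+1$, are precisely the sets $S_{k,j}$ with $j=k,\ldots,n$ that the inner loop scans.

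The core step is an induction on the outer index $k$, with inductive hypothesis that the algorithm's first $k-1$ rejections agree with closed testing, i.e.\ closed testing rejects $H_{(1)},\ldots,H_{(k-1)}$. Under this hypothesis I claim $H_{(k)}$ is rejected by closed testing if and only if all $S_{k,j}$, $j=k,\ldots,n$, are rejected, which is exactly the inner-loop test. The forward direction is immediate since each $S_{k,j}$ contains $(k)$. For the converse, any parent set $J\ni(k)$ of size $m\le n-k+1$ is rejected by the hardest-set observation; any parent set of size $m>n-k+1$ must contain some index below $k$ (there are only $n-k$ indices above $k$), which is already rejected by the inductive hypothesis, so $J$ is rejected as a parent of an already-rejected hypothesis. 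Thus all parents of $(k)$ are rejected and $H_{(k)}$ is rejected, matching the algorithm; the induction closes, and by the prefix lemma both procedures output the same prefix and inherit family-wise error control. The subtle point, and the one I would be most careful about, is exactly this split into small and large sizes: the algorithm deliberately omits the large parent sets, and the argument that they are automatically rejected rests on the prefix lemma and the inductive hypothesis rather than on any (nonexistent) monotonicity across the different test sizes $T_m$.

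Finally, for the running time, the inner loop at a fixed $k$ applies $T_{n-j+1}$ to sets of sizes $n-k+1,n-k,\ldots,1$, costing $\sum_{m=1}^{n-k+1}O(m)=O(n^2)$. A full inner scan is completed only while hypotheses are being rejected, that is for $k=1,\ldots,s$ (where $s=m^*$ is the number of rejections), after which the scan at $k=s+1$ terminates upon the first non-rejection; hence at most $s+1$ inner scans run, for a total of $O((s+1)n^2)=O(sn^2)$, with the initial sort costing $O(n\log n)$ and being dominated.
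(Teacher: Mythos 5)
Your proposal is correct, and its core engine is the same as the paper's: by Symmetry and Monotonicity, all size-$m$ parents of $(k)$ are rejected iff the ``hardest'' one (the index $(k)$ joined with the largest remaining $p$-values) is rejected, and a cross-hypothesis monotonicity (your prefix lemma, proved by the swap argument) justifies the sequential stopping rule; the cost accounting is also the same, with the same benign $O((s+1)n^2)=O(sn^2)$ slack. Where you genuinely add something is the point you yourself flag as subtle: at outer step $k$, FACT only tests parent sets of sizes $1,\ldots,n-k+1$, yet closed testing quantifies over parents of all sizes up to $n$. The paper's proof writes the closed-testing rule for $p_k$ as a product of worst-case terms over \emph{every} size, $\Phi(p_k)\cdot\Phi(p_k,p^{-k}_{(n)})\cdots\Phi(p_k,p^{-k}_{(1)},\ldots,p^{-k}_{(n)})$, and then asserts agreement with FACT without explicitly explaining why the truncated inner loop suffices --- implicitly, the worst set of size $m>n-k+1$ containing $(k)$ is the suffix set $\{p_{(n-m+1)},\ldots,p_{(n)}\}$, which was already tested at the earlier outer iteration $n-m+1$. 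Your induction handles this cleanly and slightly differently: any parent of size $m>n-k+1$ contains some index $(i)$ with $i<k$, and since closed testing rejects $H_{(i)}$ by the inductive hypothesis, that parent is rejected by definition of closure, with no need to locate the hardest large set at all. So your write-up is a more rigorous rendering of the same argument: it makes the loop bounds of Algorithm 1 provably sufficient rather than tacitly so, at the price of carrying an explicit induction and prefix lemma that the paper compresses into one monotonicity remark.
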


The significance of the FACT method is that it  enables using the closed testing framework in a more flexible way than what was known before. It is possible to design closed testing architectures based on the FACT algorithm for specific applications. Before the FACT method, the applications of  closed testing were limited to either very special cases like Holm's method, or were computationally intractable. We show in the Supplementary Material that the FACT algorithm generalizes well-known shortcuts for closed testing. %
The intuition behind using symmetric rules is that hypotheses are exchangeable. Then, for every subset of $p$-values we need to think about a test statistic that has a good chance to detect any signal in that group. By choosing the local testing rules appropriately for specific applications, practitioners may design powerful new closed testing architectures. In the next section we will illustrate the steps.
Later we will give a broader set of examples. 

\begin{algorithm}[!h]
\caption{Adjusted $p$-values for FACT} \label{fct:a}
\begin{tabbing}
   \enspace Input $p$-values $p_i$, $i = 1, \ldots, n$ \\
   \enspace Input local testing rules $T_k$ for testing subsets of size $k$\\ 
   \enspace Sort the $p$-values: $p_{(1)} \le \ldots \le p_{(n)}$\\
   \enspace For $k = 1$ to $k=n$ \\
   \qquad For $j = 1$ to $j=n$\\
  \qquad\qquad  Let $q_{kj}$ be the $p$-value for the null with $p$-values $p_{(k)},p_{(j+1)},p_{(j+2)},\ldots,p_{(n)}$\\
  \qquad\qquad using local testing rule $T_{n-j+1}$ \\
  \qquad Let $\tilde p_{(k)} = \max_{j} q_{kj}$ \\
\enspace Output adjusted $p$-values $\tilde p_{(1)},\ldots,\tilde p_{(n)}$
\end{tabbing}
\vspace*{-10pt}
\end{algorithm}

It is often of interest to compute adjusted $p$-values for a multiple testing procedure. An adjusted $p$-value of a hypothesis $H_i$ is the smallest critical value at which the hypothesis is rejected. To complement FACT, we also show how to compute adjusted $p$-values for the method in Algorithm \ref{fct:a}. This algorithm requires as input methods for evaluating $p$-values for the local testing rules $T_k$. Its computational complexity is $O(n^3)$ for testing rules for which finding critical values takes linear time. While computing the adjusted $p$-values is based on the same idea as the FACT algorithm itself, we present this method separately, because in principle it can be run independently from FACT, and because it takes time $O(n^3)$ instead of $O(sn^2)$.

\subsection{Simes-higher criticism fusion rule}
\label{sec_ex2}
The flexibility of the FACT algorithm is achieved by using different local testing rules for intersection nulls of different subset sizes. We will show here that this enables the design of tests that can be powerful against both sparse and dense alternatives. 

Suppose that we are in a setting where we have $n$ total $p$-values and the model is $s$-sparse, in the sense that there are $s$ false nulls; or equivalently nonzero effect sizes. If we are testing an intersection null of size $j$ containing a specific false null, then in the worst case there are 
$$A = \max\{j-(n-s),1\}$$ 
false nulls in this set. Now, the closure principle must reject each subset of size $j$ containing the hypothesis in order to reject the individual null. Therefore, to maximize the chances that we reject this false null, we should use only the smallest $A$ $p$-values out of the total of $j$ $p$-values in the test statistic $T_j$. 

The above reasoning gives a heuristic for the maximal number of non-nulls that each local testing rule can use. When testing subsets of a small size, we should use only a small number, and possibly only one $p$-value. This suggests that we should use the Bonferroni or the Simes rules for small subsets. Specifically, given a guess $s$ for the sparsity, in this section we propose to use the Simes rule for subsets of size at most $n-s+1$. The reason why we use Simes is that it is strictly more powerful than the Bonferroni method, and it is known to work under general correlation structures, see the Supplementary Material for details.

 Given an intersection null of size $j$, and $p$-values $p_i$ for this null, Simes rule sorts them, and rejects the intersection null if, for any $p$-value, $p_{(i)}/i $ is at most $\alpha/j$ \citep{simes1986improved}. Formally, it rejects if

$$\min_i \frac{p_{(i)}}{i} \le \frac{\alpha}{j}.$$

The Simes test is more powerful than the Bonferroni test, and has correct level under a broad class of positive dependence structures, see for instance \cite{goeman2014multiple,tamhane2018advances} for details. We will see later that both the Bonferroni and Simes rules are symmetric and monotone, so they can be used with FACT, see the Supplementary Material.

When the subset size is large, we should use local testing rules that are powerful against relatively denser alternatives. Given that the higher criticism rule for local testing is known to be effective against many types of alternatives \citep{donoho2004higher}, we propose to use it for large subsets. The higher criticism for testing a global null based on $j$ $p$-values works as follows.  For a fixed critical value $\beta$, under the global null, the number of $p$-values less than $\beta$ follows a binomial distribution with $j$ trials and success probability $\beta$. Therefore, we can compute the fraction $f_\beta$ of $p$-values below $\beta$, standardize it by its standard deviation, and obtain the test statistic
$$C(\beta) = j^{1/2} \frac{f_\beta-\beta}{\{\beta(1-\beta)\}^{1/2}}.$$
If this test statistic is large, then the fraction of $p$-values below $\beta$ is large, suggesting evidence against the global null. In order to be adaptive to the number of nonzero effects, the higher criticism test takes the largest of these statistics over a range of $\beta$s. This ensures that we can detect both a few large effect sizes, as well as a larger number of moderate effect sizes, see \cite{donoho2004higher} and the Supplementary Material. We will see that the higher criticism rule is symmetric and monotone, and so it can be used with FACT, see the Supplementary Material.

This leads to the  Simes-higher criticism fusion rule, which is summarized in Algorithm \ref{fct2}.

\begin{algorithm}[!h]
\caption{Simes-higher criticism fusion for FACT} \label{fct2}
\begin{tabbing}
   \enspace Input number of hypotheses $n$ \\
   \enspace Input preliminary estimate for sparsity $s$\\ 
   \enspace For $j\le n-s+1$, choose $T_j$ to be the Simes rule\\
   \enspace For $j> n-s+1$, choose $T_j$ to be the higher criticism rule\\
   \enspace Run FACT with this choice of local testing rules\\
\end{tabbing}
\vspace*{-20pt}
\end{algorithm}

We illustrate this algorithm in simulations and a data analysis example in the Supplementary Material. There we also attempt to address concerns about the practicality of this method, for instance about the assumption of known sparsity levels.

\subsection{Some related work}
\label{rw2}

Here we review some of the most closely related work in the literature. For broader reviews of multiple testing, see for instance \cite{hochberg1987multiple, goeman2014multiple, bretz2016multiple}. See also \cite{tamhane2018advances} for a review of $p$-value based methods focusing on more recent methods.
The relationship between closure and computationally efficient algorithms has been studied from various perspectives. For instance, \cite{grechanovsky1999closed} establish conditions when a closure has  a step-down sequentially rejective shortcut.  Our work has a broader scope, because step-down procedures are a very special case of efficient algorithms. The FACT method is often not a step-down algorithm. \cite{gou2014class} propose improved hybrid Hochberg-Hommel type step-up multiple test procedures, which are also different from the FACT Simes-higher criticism hybrid. 

A different line of work aims to develop interpretable closed testing methods using graphical approaches \citep{bretz2009graphical}. This is important because these methods can be easily explained to practitioners. However, from a methodological and computational point of view, it is also important to develop new powerful methods such as those in our work. 

Monotonicity ideas have appeared in the literature on multiple testing.  A related monotonicity condition appeared in \cite{birnbaum1954combining}, however, it was  used for a completely different purpose than in our work. Indeed, there it was used as a condition under which meta-analysis methods are optimal.  A monotonicity of the resulting closed test has been discussed in \cite{dmitrienko2009multiple}, Section 2.3.4. However, this is a global monotonicity condition, different from ours, because it applies to the overall multiple testing procedure, as opposed to the local tests. In particular, our methods are always monotone in the global sense. \cite{hommel2008aesthetics} discuss several monotonicity requirements for tests, and mention the present notion too, see their Section 3.2. However, we also develop explicit algorithms based on this condition.

Monotonicity has also appeared as a condition for error control in the sequential testing principle of \cite{goeman2010sequential}. However, the algorithms presented there are even more general than the closed testing principle, and thus not always computable in polynomial time.  A similar observation about computationally efficient closed tests was made by \cite{henning2015closed}, see their Section 3, who also noticed that for deciding whether or not to reject $H_i$, one must identify  the hurdle subset for each subset  size. However, it appears that they did not explicitly describe an algorithm to do so in full detail.

Studying a different problem, that of constructing a confidence statement on the number of false rejections incurred, \cite{goeman2011multiple} also construct shortcuts for exchangeable local tests, such as the Fisher's test, see their Section 4 and Appendix A. While  they are based on the same principle, looking at the worst case set at each level, the two algorithms are different.

\section{Constructing monotone symmetric rules}
\label{sec_con}

How should we construct monotone symmetric rules? In this section we discuss some general principles. Suppose we use a test statistic based rule, where we compute some test statistic $T = T(p) $ based on the $p$-values, and reject if this test statistic is less than some critical value $ c_{\alpha}$. Formally, the rejection rule has the form
 $$\Phi(p) = I\{T(p) \le c_{\alpha}\}.$$ 
 
 When does such a test become monotone and symmetric? It is easy to see that this will hold if the test statistic $T$ itself is also monotone increasing and symmetric, in the following sense: 

\begin{assumption}[Monotonicity] If we decrease any $p$-value, the value of the test statistic $T$ decreases.  Formally, $T(p) \le T(p')$ if on all coordinates $i$, $p_i \le p'_i$. 
\end{assumption}

\begin{assumption}[Symmetry] The test statistic $T$ only depends on the set of $p$-values, and not on their indices. Formally, $T(p) = T(p_{\pi})$ for any permutation $\pi$. 
\end{assumption}

Moving from rejection rules to test statistics is valuable, because one can naturally combine them as below. See the Supplementary Material for the proof.

\begin{lemma}[Constructing monotone symmetric test statistics]
\label{cl_lem}
Let $T^i, i \in I$ be any collection of monotone symmetric test statistics. Then, one can construct new monotone symmetric test statistics by taking: 
\begin{property}
Minima: $\min_{i \in I} T^i$ is monotone symmetric.
\end{property}

\begin{property}
Maxima: $\max_{i \in I} T^i$ is monotone symmetric. 
\end{property}

\begin{property}
Non-negative linear combinations: $\sum_{i = 1}^k \lambda_i T^i$  is monotone symmetric for any $\lambda_i \ge 0$. 
\end{property}

\begin{property}
Monotone functions: 
$g(T^1,\ldots,T^k)$  is monotone symmetric if the function $g$ is coordinate-wise monotone, in the sense that $g(x_1,\ldots,x_k) \le g(x'_1,\ldots,x'_k)$ if on all coordinates $i$, $x_i \le x'_i$. 
\end{property}

\end{lemma}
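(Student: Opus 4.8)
The plan is to verify, for each of the four constructions, the two defining conditions (monotonicity and symmetry) directly from the hypotheses on the $T^i$. The unifying observation is that both conditions are preserved by any outer aggregation that is itself coordinate-wise monotone and that treats its arguments in a label-free way. In particular, the minimum, the maximum, and a non-negative linear combination are all coordinate-wise monotone functions of their arguments, so the first three properties are special cases of the fourth, up to the caveat that the minimum and maximum are stated over a possibly infinite index set $I$ whereas the monotone-function property is phrased for finitely many arguments.

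I would first prove the general monotone-function property. Fix $p$ and $p'$ with $p_i \le p'_i$ on all coordinates. Monotonicity of each $T^i$ gives $T^i(p) \le T^i(p')$ for every $i \in \{1, \ldots, k\}$, so the argument vectors satisfy $(T^1(p), \ldots, T^k(p)) \le (T^1(p'), \ldots, T^k(p'))$ coordinatewise. Coordinate-wise monotonicity of $g$ then yields $g(T^1(p), \ldots, T^k(p)) \le g(T^1(p'), \ldots, T^k(p'))$, which is exactly monotonicity of the composition. For symmetry, fix any permutation $\pi$; symmetry of each $T^i$ gives $T^i(p) = T^i(p_\pi)$, so the two argument vectors feeding $g$ coincide and hence $g$ returns the same value. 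This settles the monotone-function property, and specializing $g$ to a non-negative linear combination $\sum_{i=1}^k \lambda_i x_i$, which is coordinate-wise monotone precisely because each $\lambda_i \ge 0$, settles the linear-combination property.

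For the minimum and maximum over a possibly infinite index set $I$, I would argue directly rather than invoke the finite composition rule. For the minimum, monotonicity follows because if $i^\ast \in I$ attains (or nearly attains) $\min_{i \in I} T^i(p')$, then $\min_{i \in I} T^i(p) \le T^{i^\ast}(p) \le T^{i^\ast}(p') = \min_{i \in I} T^i(p')$; symmetry is immediate since $T^i(p) = T^i(p_\pi)$ for each $i$, so the two families of values are identical and their minima agree. The maximum is handled symmetrically.

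I do not expect a serious obstacle in this lemma, as every step reduces to chasing the two structural conditions through the aggregation. The only point deserving mild care is the infinite index set in the minimum and maximum: one should phrase these as an infimum and supremum and check that the displayed inequalities pass to the extremal value as above, rather than assuming the extremum is attained or blindly reusing the finite-arguments statement.
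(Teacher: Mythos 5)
Your proof is correct, but it takes a genuinely different route from the paper's. The paper works at the level of rejection regions: it first observes that a test statistic is monotone symmetric if and only if all of its sub-level sets $\{p : T(p)\le c\}$ are monotone symmetric \emph{sets}, proves union and intersection lemmas for such sets (arbitrary unions and intersections preserve both properties), and then obtains minima from the union lemma via $\{\min_{i} T^i \le c\} = \bigcup_{i} \{T^i\le c\}$ and maxima from the intersection lemma via $\{\max_{i} T^i \le c\} = \bigcap_{i} \{T^i \le c\}$; the non-negative combination is dispatched as immediate from the definitions, and the monotone-function property is proved by the same direct composition argument you give. You instead stay entirely at the level of the statistics themselves: you establish the composition property first, fold the finite minimum, maximum, and non-negative combination into it as special cases, and treat the infinite index set by a direct infimum/supremum argument. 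Your route is more elementary and self-contained, and on one point it is actually more careful than the paper's: the union identity $\{\inf_{i\in I} T^i \le c\} = \bigcup_{i\in I}\{T^i\le c\}$ that the paper invokes can fail for infinite $I$ when the infimum is not attained (take $T^i \equiv c+1/i$), and your $\varepsilon$-approximate-minimizer argument --- or, even more simply, noting that $\inf_{j} T^j(p)\le T^i(p)\le T^i(p')$ for every $i$ and taking the infimum over $i$ on the right --- closes exactly that gap; the intersection identity underlying the supremum case is exact, so no such care is needed there. What the paper's set-level detour buys is a reusable structural toolkit --- monotone symmetric sets closed under arbitrary unions and intersections, together with a representation by symmetrized hyper-rectangles --- which the paper later reuses elsewhere, for instance to show that the Generalized Simes Test is symmetric and monotone.
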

An important example are the order statistics.

We mention a few examples of monotone symmetric rules, deferring the details to the Supplementary Material. 
The Bonferroni type rule with $T(p_J) = \min_{j\in J}p_j$ is monotone and symmetric. In this case, as expected, the FACT algorithm can be simplified into Holm's procedure. However, Holm's procedure is more direct and has a smaller complexity of $O(n\log n)$. Similarly the Simes rule is monotone and symmetric, and the FACT algorithm can be shown to simplify into Hommel's procedure. As before, Hommel's procedure is more direct and has a smaller complexity of $O(n^2)$. Recently its complexity has been reduced to $O(n)$ for sorted $p$-values \citep{meijer2017shortcut}. Moreover, any Generalized Simes Test \citep{grechanovsky1999closed} is mononotone and symmetric. So are monotone functions of order statistics, including Fisher's and Stouffer's combination, rank-sum type statistics, the higher criticism, and the hybrid Hochberg-Hommel method \citep{gou2014class}.

Here we studied the symmetry of the test statistic. However, the symmetry of the tests also depends on the distribution of $p$-values. Here we implicitly assumed that we have exchangeable $p$-value distributions, so that we can choose the same critical value for each fixed subset size. An alternative is to set critical values using probability inequalities, such as the Bonferroni or Simes inequalities. However, these are quite rare, so this may limit the applicability of our method.

\section*{Acknowledgement}
We thank the associate editor and the referee for the helpful suggestions that have significantly improved the paper. We are grateful to Vladimir Vovk for pointing out an indexing error in the calculation of the adjusted $p$-values \citep{vovk2019combining}. We also thank Emmanuel Candes, Jelle Goeman, Jiangtao Gou,  William Leeb, Ajit Tamhane, Jingshu Wang, Min Xu,  and Nancy Zhang for valuable comments.

\appendix

\section{Examples of monotone symmetric test statistics}
\label{sec_ex_app}

\subsection{Bonferroni-type rules}
\label{br}

The simplest class of monotone symmetric rules is the  Bonferroni-type rules: 

$$T(p_J) = \min_{j\in J} p_j.$$

The critical values are  $c_{|J|,\alpha} = \alpha/|J|$.  The Bonferroni rule controls the type I error under any dependence structure. 
It is well known that the closure principle applied to the Bonferroni test becomes Holm's procedure \citep{holm1979simple}. 

Moreover, we show that after some work, the FACT algorithm reduces to Holm's procedure. See Section \ref{pf:fct_bonf} for the proof. This result is important because it shows that our FACT algorithm is sensible, as it at least matches the most well-known example of an efficient closed testing algorithm. Otherwise, this result just a basic sanity check, as we already know that both FACT and Holm's method are exact shortcuts for the closure principle, so they should lead to the same rejections. However, we still think it is illuminating to understand more deeply the connection between the two shortcuts, and that is the reason why we present this result and its proof. 

\begin{proposition}
\label{fct_bonf}
The FACT algorithm for closing the Bonferroni method leads to the same rejections as Holm's procedure. 
\end{proposition}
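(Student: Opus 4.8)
The plan is to prove the equivalence by directly tracing the FACT algorithm with the Bonferroni rule and checking, step by step, that its output coincides with Holm's step-down decisions, rather than routing through the general correctness theorem (which would make the claim immediate but, as the surrounding text notes, uninformative). Throughout I work with the sorted $p$-values $p_{(1)} \le \cdots \le p_{(n)}$, and recall from Section~\ref{br} that the Bonferroni rule applied to a set of size $m$ rejects exactly when its smallest $p$-value is at most $\alpha/m$.

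First I would evaluate the local test inside the inner loop of Algorithm~\ref{fct}. For fixed $k$ and $j$, FACT applies $T_{n-j+1}$ to the set $\{p_{(k)}, p_{(j+1)}, p_{(j+2)}, \ldots, p_{(n)}\}$, which has size $n-j+1$ and smallest element $p_{(k)}$ (since $k \le j$, all later indices carry larger sorted $p$-values). Hence the Bonferroni rule rejects this set if and only if $p_{(k)} \le \alpha/(n-j+1)$. This collapses the entire inner loop into a family of scalar comparisons indexed by $j \in \{k, \ldots, n\}$.

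Next I would locate the binding comparison. As $j$ increases from $k$ to $n$, the threshold $\alpha/(n-j+1)$ is non-decreasing, so the very first inner-loop check, at $j=k$ with threshold $\alpha/(n-k+1)$, is the most stringent: if $p_{(k)} > \alpha/(n-k+1)$ the loop fails immediately at $j=k$ and transfers control to the rejection line, whereas if $p_{(k)} \le \alpha/(n-k+1)$ every subsequent check passes as well and the loop runs to completion. Thus the whole inner loop reduces to the single comparison $p_{(k)} \le \alpha/(n-k+1)$. Feeding this back into the outer loop, FACT advances $k$ precisely while this inequality holds and halts at the first index $k^\ast$ (possibly $n+1$, under the stated empty-set conventions) with $p_{(k^\ast)} > \alpha/(n-k^\ast+1)$, rejecting $p_{(1)}, \ldots, p_{(k^\ast-1)}$.

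Finally I would match this to Holm's procedure, whose step-down form stops at the first $k^\ast$ with $p_{(k^\ast)} > \alpha/(n-k^\ast+1)$ and rejects all hypotheses with strictly smaller $p$-values; the two rejection sets then coincide, giving the claim. The computation is elementary, so I expect the only delicate point to be the bookkeeping of the nested-loop control flow together with the empty-set conventions at $j=n$ and $k=n+1$ — in particular, verifying that the unique possible failure point is $j=k$ and that the outer loop's termination index equals Holm's stopping index.
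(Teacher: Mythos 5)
Your proposal is correct and takes essentially the same route as the paper's proof: both evaluate the Bonferroni rule on the worst-case sets $\{p_{(k)}, p_{(j+1)}, \ldots, p_{(n)}\}$, note that the thresholds $\alpha/(n-j+1)$ are monotone in the subset size so the full-size comparison $p_{(k)} \le \alpha/(n-k+1)$ is the binding one, and conclude that FACT halts and rejects exactly as Holm's step-down procedure. If anything, your write-up is slightly more careful than the paper's, since you also track the loop control flow and the empty-set conventions at $j=n$ and $k=n+1$ explicitly.
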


  

\subsection{Simes-type rules}
\label{sr}

A second class of monotone symmetric rules are  Simes-type rules, which are based on comparing the order statistics of the $p$-values to increasing thresholds:
$$\Phi(p_J) = I\left(\min_i \frac{p_{(i)}}{i} \le \frac{\alpha}{|J|}\right).$$

Therefore, $T(p_J) = \min_{i\in J} p_{(i)}/i$, and $c_{|J|,\alpha} = \alpha/|J|$ \citep{simes1986improved}. The Simes test is more powerful than the Bonferroni test. Moreover, Simes' test has exact level $\alpha$  for the intersection null under independence \citep{simes1986improved}. Under many types of positive dependence, Simes is conservative \citep{samuel1996simes,sarkar1997simes}, and the known situations in which it is anti-conservative occur under quite pathological negative dependence structures \citep{rodland2006simes}. See also \cite{goeman2014multiple, tamhane2018advances}.

As a first remark, we observe that Simes' method is symmetric and monotone by inspection. 
Next we apply closed testing to the Simes test. 
\cite{hommel1988stagewise} gives the following algorithm for the closure of Simes: Let $j$ be the largest index such that 
$$p_{(n-j+k)} > k\alpha/j$$
 for all $k=1,\ldots,j$. If $j$ does not exist, reject all $H_i$. Otherwise, reject all $H_i$ with $p_i \le \alpha/j$. This algorithm takes $O(n^2)$ in the worst case.

The FACT method takes $O(n^3)$ in the worst case, and thus clearly its steps do not agree with Hommel's method. However, we now show that there is a simplification, and the FACT method can be reduced to Hommel's procedure. As above, since since both methods are exact shortcuts, we already know that this result must be true. However, we still find it insightful to understand the connection between the methods.

\begin{proposition}
\label{fct_simes}
The FACT algorithm for closing Simes' method leads to the same rejections as Hommel's procedure. 
\end{proposition}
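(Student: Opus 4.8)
The plan is to bypass the abstract route --- both FACT and Hommel's procedure are exact shortcuts for the closure of the Simes test (the former by Theorem~\ref{thm_fct}, since Simes is monotone and symmetric, the latter by \cite{hommel1988stagewise}), so they must produce identical rejection sets --- and instead match the two outputs directly, which is the insightful part. The key object in Hommel's procedure is the index $j^\ast$, the largest $j$ for which the Simes test fails on the $j$ largest $p$-values $\{p_{(n-j+1)}, \ldots, p_{(n)}\}$; equivalently, $p_{(n-j^\ast+k)} > k\alpha/j^\ast$ for all $k = 1, \ldots, j^\ast$. Hommel then rejects exactly those $p_{(i)} \le \alpha/j^\ast$; writing $r = |\{i : p_{(i)} \le \alpha/j^\ast\}|$, this is $p_{(1)}, \ldots, p_{(r)}$. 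On the FACT side, for a fixed rank $k$ the algorithm applies Simes to the subsets $\{p_{(k)}, p_{(j+1)}, \ldots, p_{(n)}\}$ for $j = k, \ldots, n$; these are precisely $p_{(k)}$ together with the $m-1$ largest $p$-values, where $m = n-j+1$ is the subset size, and FACT rejects $p_{(1)}, \ldots, p_{(k^\ast-1)}$, where $k^\ast$ is the first rank admitting a subset that Simes does not reject. The whole proof reduces to showing $k^\ast = r+1$.

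For the inclusion $k^\ast \ge r+1$ I would show every checked subset is rejected whenever $k \le r$, i.e.\ when $p_{(k)} \le \alpha/j^\ast$. Split on the subset size $m$. For $m \le j^\ast$ we have $\alpha/m \ge \alpha/j^\ast \ge p_{(k)}$, so the first-position Simes inequality $p_{(k)} \le \alpha/m$ already forces rejection. For $m > j^\ast$, the definition of $j^\ast$ as the \emph{largest} failing size guarantees the top-$m$ set $\{p_{(n-m+1)}, \ldots, p_{(n)}\}$ is rejected; since the checked subset of size $m$ is obtained from the top-$m$ set by replacing its smallest entry $p_{(n-m+1)}$ with the smaller value $p_{(k)}$, monotonicity of Simes transfers the rejection. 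Thus all subsets containing $p_{(k)}$ are rejected and $p_{(k)}$ survives to the next rank.

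For the reverse inclusion $k^\ast \le r+1$ I would exhibit a single failing subset at rank $k = r+1$. The first-position failure of the top-$j^\ast$ set gives $p_{(n-j^\ast+1)} > \alpha/j^\ast$, hence $r \le n - j^\ast$ and $k = r+1 \le n-j^\ast+1$; this is exactly the condition ensuring that the size-$j^\ast$ subset $\{p_{(r+1)}, p_{(n-j^\ast+2)}, \ldots, p_{(n)}\}$ is among the subsets FACT checks (take $j = n-j^\ast+1 \ge k$). This subset agrees with the top-$j^\ast$ set except that its smallest entry is $p_{(r+1)} > \alpha/j^\ast$ instead of $p_{(n-j^\ast+1)}$. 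Since the top-$j^\ast$ set already violates every higher-position Simes inequality ($p_{(n-j^\ast+\ell)} > \ell\alpha/j^\ast$ for $\ell \ge 2$) and $p_{(r+1)} > \alpha/j^\ast$ kills the first-position inequality, Simes does not reject it, so $p_{(r+1)}$ cannot be rejected and $k^\ast \le r+1$. Combining the two inclusions gives $k^\ast = r+1$, i.e.\ FACT and Hommel reject the same set $p_{(1)}, \ldots, p_{(r)}$; the degenerate case where $j^\ast$ does not exist (no top-$j$ set fails, so every checked subset is rejected by the same monotonicity argument) corresponds to FACT rejecting everything, matching Hommel's ``reject all'' branch.

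The main obstacle I anticipate is not any single inequality but the index bookkeeping that links Hommel's threshold $\alpha/j^\ast$ to FACT's rank cutoff. The linchpin is the elementary observation $r \le n - j^\ast$ extracted from the first-position failure of the top-$j^\ast$ set: it is what guarantees the critical size-$j^\ast$ subset is actually visited by the inner loop, so that the worst-case set used in Hommel's analysis coincides with one of FACT's checked subsets, and getting the off-by-one relationships between subset size $m$, loop index $j$, and rank $k$ exactly right is where care is needed.
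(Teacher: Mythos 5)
Your proof is correct, but it takes a genuinely different route from the paper's. The paper argues \emph{sequentially}: it views Hommel's procedure as proceeding through steps $j = 0, 1, \ldots, n$ and shows, loop by loop, that FACT stops at its $(n-j+1)$-st outer loop exactly when Hommel's $j$-th step condition $p_{(n)} > \alpha,\ p_{(n-1)} > \alpha(j-1)/j, \ldots, p_{(n-j+1)} \le \alpha/j$ holds; it verifies the cases $j = 0$ and $j = 1$ in detail (including the argument that the condition $p_{(n-1)} \le \alpha/2$ guarantees all earlier loops pass, via the Simes thresholds $\alpha(n-j)/(n-j+1) > \alpha/2$) and then closes with ``by a similar inductive argument.'' You instead characterize both outputs in closed form --- Hommel via the maximal failing index $j^\ast$ and the count $r = |\{i : p_{(i)} \le \alpha/j^\ast\}|$, FACT via its stopping rank $k^\ast$ --- and prove $k^\ast = r+1$ by a two-sided inclusion: the case split $m \le j^\ast$ (first-position Simes inequality) versus $m > j^\ast$ (maximality of $j^\ast$ rejects the top-$m$ set, and monotonicity transfers rejection when its smallest entry is replaced by $p_{(k)}$) handles one direction, and the explicit failing subset $\{p_{(r+1)}, p_{(n-j^\ast+2)}, \ldots, p_{(n)}\}$, legitimized by the bound $r \le n - j^\ast$ extracted from the first-position failure, handles the other. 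Your approach buys a complete, non-inductive argument in which every step is fully checked (the paper's induction is only sketched beyond $j=1$), and it isolates the structural reason the shortcut works --- the extremality of $j^\ast$ plus monotone transfer to worst-case subsets; the paper's approach buys a more operational picture, showing that the two algorithms' stopping dynamics agree step for step, which makes the correspondence between loop indices and Hommel's steps transparent. Your index bookkeeping (the identities $m = n - j + 1$, the constraint $j \ge k$, and the inequality $r \le n - j^\ast$ ensuring the critical subset is actually visited) checks out, and your handling of the degenerate ``$j^\ast$ does not exist'' branch matches the paper's treatment of the $k = n$ loop.
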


 See Section \ref{pf:fct_simes} for the proof. 

More recently,  \cite{meijer2017shortcut} gave a more efficient, $O(n\log n)$ algorithm for closing Simes' method. This is faster than FACT, but is limited to Simes' method. In contrast, FACT is applicable to any monotone symmetric local testing rule, including Bonferroni and monotone combinations (see below). 

Another important class of monotone local testing rules comes from the  Generalized Simes Test, \citep{grechanovsky1999closed}, which tests a intersection null $H_J$ in the following way. Let $$d_{|J|,1},\ldots, d_{|J|,1}$$ be a sequence of critical values. Reject $H_J$ if, there is $j\in J$ with $$p_{(j)} \le d_{|J|,j}.$$ Here $p_{(j)}$ refers to the ordering of the $p$-values within the set $J$. Note that for Simes test, $d_{|J|,j} = \alpha j/|J|$. \cite{liu1996multiple} gave conditions for reducibility of a closure based on this method to general sequentially rejective step-down or step-up procedures. Here we consider a more general set of such procedures.

We next show that the Generalized Simes Test is symmetric and monotone. Therefore, we can use it as a component in the FACT algorithm. 
This paves the way to a wide variety of new methods for closed testing. See Section \ref{pf:gst} for the proof. 

\begin{proposition}[GST-Sym-Mon]
\label{gst}
The Generalized Simes Test is symmetric and monotone. 
\end{proposition}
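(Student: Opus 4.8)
The plan is to avoid checking the two assumptions on the rejection rule $\Phi$ directly, and instead reduce everything to the test-statistic machinery of Section \ref{sec_con} together with Lemma \ref{cl_lem}. First I would recast the Generalized Simes Test as a test-statistic-based rule. The test rejects $H_J$ when there is a rank $j \in \{1,\ldots,|J|\}$ with $p_{(j)} \le d_{|J|,j}$, and this event is exactly
$$\min_{1 \le j \le |J|}\left(p_{(j)} - d_{|J|,j}\right) \le 0.$$
Hence, setting $T(p_J) = \min_{j}\left(p_{(j)} - d_{|J|,j}\right)$ and critical value $c_{|J|,\alpha} = 0$, the rule takes the form $\Phi(p_J) = I\{T(p_J) \le c_{|J|,\alpha}\}$. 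By the observation in Section \ref{sec_con}, it then suffices to show that the statistic $T$ satisfies the monotonicity and symmetry assumptions for test statistics.

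Next I would establish the key building block: each order statistic $p_{(j)}$, regarded as a function of the vector $p_J$, is a monotone symmetric statistic. Symmetry is immediate, since sorting discards the labels, so $p_{(j)}(p_J) = p_{(j)}(p_{\pi(J)})$. For monotonicity I would use the elementary fact that if $p_i \le p_i'$ on every coordinate then $p_{(j)} \le p_{(j)}'$ for every rank $j$; the cleanest justification is that $p_{(j)} \le t$ holds if and only if at least $j$ of the coordinates lie at or below $t$, and lowering coordinates can only enlarge the set of coordinates below any fixed threshold. This is the one genuinely technical step, although it is entirely standard; the remainder of the argument is purely formal.

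Finally I would assemble the statistic using Lemma \ref{cl_lem}. For a test of fixed size $|J|$ the numbers $d_{|J|,j}$ are constants, so each map $x \mapsto x - d_{|J|,j}$ is coordinatewise increasing, and therefore $p_{(j)} - d_{|J|,j}$ is monotone symmetric by the monotone-function rule of Lemma \ref{cl_lem}. Taking the minimum over $j$ and invoking the minima rule of Lemma \ref{cl_lem} shows that $T$ is monotone symmetric. Combined with the reduction above, and noting that with $T$ increasing and rejection at $T \le c_{|J|,\alpha}$ the convention $\Phi(p_J) \ge \Phi(p'_J)$ when $p \le p'$ is respected, this proves Proposition \ref{gst}. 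I do not expect a serious obstacle here; the only care needed is the bookkeeping between ``$j$ as an element of $J$'' and ``$j$ as a rank index'' in the stated definition, and matching the direction of monotonicity to the paper's convention.
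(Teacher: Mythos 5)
Your proof is correct and is in substance the same as the paper's: the paper decomposes the rejection region of the Generalized Simes Test as the union $\cup_i \{p : p_{(i)} \le d_{|J|,i}\}$ of monotone symmetric sets built from order statistics and invokes the union closure property, which is exactly the sub-level-set dual of your statistic $T(p_J) = \min_j \left(p_{(j)} - d_{|J|,j}\right)$ combined with the minima rule of Lemma \ref{cl_lem} (whose own proof in the paper reduces to that union lemma). Your statistic-level phrasing is merely a cosmetic repackaging of the same decomposition, so there is nothing further to flag.
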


\subsection{Monotone sums and functions of order statistics}
\label{br2}

Another broad class of monotone symmetric rules is the set of  monotone sums, in which the test statistics are sums of monotone functions of the $p$-values. These are especially appealing if the effect sizes are ``dense'', in the sense that we expect to have many nonzero effects. Monotone sums include the following test statistics: 

\benum

\item {\bf  Fisher's combination}: $$T(p_J) = 2\sum_{i\in J} \ln(p_i).$$ Fisher's test has a $-\chi^2_{2k}$ distribution under the intersection null when all $p$-values are uniform and independent \citep{fisher1970statistical}. Thus. $c_{k,\alpha} = -\chi^2_{2k}(1-\alpha)$, the $100(1-\alpha)$-th percentile of the $\chi^2_{2k}$ distribution.

Fisher's combination is monotone, because the function $2\ln(x)$ is monotone increasing for $x\in[0,1]$.

\item {\bf  Stouffer's combination}: $$T(p_J) = \sum_{i\in J} \Phi^{-1}(p_i),$$ where $\Phi$ is the standard normal cdf \citep{stouffer1949american}. Stouffer's combination has a $\N(0,k)$ distribution under the intersection null when all $p$-values are uniform and independent. Thus, $c_{k,\alpha} = k^{1/2} \Phi^{-1}(\alpha)$. 

Stouffer's combination test is symmetric and monotone, for the same reasons as Fisher's test. 

\item {\bf  Wilkinson's combination}: $$T(p_J) =-\sum_{i\in J} I(p_i \le d),$$ where $d>0$ is some constant \citep{wilkinson1951statistical}. Wilkinson's combination has a sign-flipped Binomial distribution with $n$ trials and success probability $d$ under the intersection null when all $p$-values are uniform and independent. Thus, its critical values can be found from the distribution of the Binomial. 

\item {\bf  Truncated Product Method}: $$T(p_J) = \sum_{i\in J}\ln(p_i) I(p_i \le \tau),$$ where $\tau>0$ is some constant \citep{zaykin2002truncated}. The value $\tau = \alpha$ is suggested as a default \citep{zaykin2002truncated}. The critical values of this test can be found numerically.

\item {\bf  Romano-Shaikh-Wolf combination}: Suppose we observe independent random variables $X_i \sim \N(\mu_i,1)$ and we wish to test $ H_i:\mu_i=0$ against $\mu_i \neq 0$. \cite{romano2011consonance} study test statistics of the form 
$$T(X_J) = \sum_{i\in J}\cosh(\ep |X_i|)$$ for $\ep>0$. They show that the closure of these test statistics has a maximin optimality property against subsets of the alternative of the form $\gamma(\ep) = \{\mu:|\mu_i|\ge \ep, \textnormal{all } i\}$, for large enough $\ep$. This follows from their more general result that maximin optimality is inherited under closure if the resulting multiple test is consonant. Clearly, their test statistics are monotone in $|X_i|$, so they fit in our framework. 

\item {\bf  Monotone combination}: More generally, we can use sums a of monotone increasing function $f$ of the $p$-values:
$$T(p_J) = \sum_{i\in J}f(p_i).$$ 

The above tests are special cases. The critical values of this test can be found numerically. 

Any monotone combination tests takes $T_n  = O(n)$ to apply to $n$ $p$-values. Thus, the overall running time of the closed testing method is  $O(n^2k)$ if $k$ nulls are rejected, and $O(n^3)$ in the worst case.

\eenum

The power and flexibility of our method is showcased by the ability to use local testing rules that go beyond the classical ones (such as Bonferroni and Simes).  Indeed, we can use any  monotone functions of the order statistics. We give several examples below: 
\benum

\item  Rank-sum type statistics. We can use test statistics of the form $T(p) = \sum_{i}f_i(p_{(i)}),$ where $f_i$ are for monotone increasing functions, possibly changing with $i$. The Bonferroni method is a special case, where $f_1(x) = x$, and the other functions are zero. Broader examples of rank-sum type statistics include
 linear weighted rank-sum statistics. For any weights $w_i \ge 0$, we can use the linear weighted rank-sum statistics $T(p) = \sum_{i}w_i \cdot p_{(i)}.$
For instance, if we want to emphasize not just the smallest, but also the second smallest $p$-value, we may use $p_{(1)}+ \ep p_{(2)}$.

\item  Min/max type statistics.  We can also use test statistics of the form  $T(p) = \max_{i}f_i(p_{(i)})$ and $T(p) = \min_{i}f_i(p_{(i)})$ where $f_i$ are monotone increasing functions, possibly changing with $i$. The following test statistics are examples: 

\benum
\item  Generalized Simes Test. Recall that this rejects $H_J$ if there is $j\in J$ with $p_{(j)} \le d_{|J|,j}$ \citep{grechanovsky1999closed}. This falls in the max-category, where $f_i(x)=I(x>d_{|J|,i})$. 

\item  Higher Criticism: The higher criticismtest statistic was introduced by Tukey in the 1960s, and experienced a resurgence of interest after its study by \cite{donoho2004higher}. The local test can be described using the functions
$$g_i(x) =\sqrt{n} \frac{x-i/n}{\sqrt{x(1-x)}},$$
and the test statistic equals, for some $0 <\alpha_0 <1$,  $T(p) = \min_{i \le \alpha_0 n} g_i(p_{(i)}).$
The critical value for the test can be chosen as $-\sqrt{2\log\log n}(1+o(1))$ \citep{donoho2004higher}, but this may need some adjustments in finite samples. This test falls in the min-category, where $f_i=g_i$ for $i \le \alpha_0 n$, and $f_i = 0$ otherwise. The higher criticism has originally been studied under independence, but there are extensions allowing some degree of dependence. 

\eenum

A challenge with these general monotone functions is that the critical values are typically not available in closed form. However they can usually be evaluated numerically.

\item  Hybrid Hochberg-Hommel.  \cite{gou2014class} proposed improved hybrid Hochberg--Hommel type step-up multiple test procedures, and showed that they are the closures of the following local tests. We reject the intersection null, i.e., $T(p) = 1$ if one of
the following mutually exclusive events occurs:
\begin{equation*}
E_i 
= \begin{cases}
p_{(n)} \le \alpha & i=1,\\
p_{(n)} > \alpha, p_{(n-1)} > c_2\alpha,\ldots, p_{(n-i+2)} > c_{i-1}\alpha &\\
p_{(n-i+1)} \le c_{i}\alpha, p_{(1)} \le d_{i}\alpha & i\ge 2.
\end{cases}
\end{equation*}
Here $c_i,d_i$ are two monotone decreasing sequences of critical constants with $1 \ge c_i \ge d_i$. It is easy to see that this local test is monotone and symmetric. Indeed, we only need to observe that if we are in $E_i$, and decrease any $p$-value, we will either stay in $E_i$, or move to some $E_k$ with a smaller index $k <i$. 

\eenum

\section{Consonance}
\label{sec_conso}

The notion of consonance is fundamental in multiple testing.  A multiple testing rule is said to be consonant when, for any set $A$, if the global null $H_A$ is rejected, there is at least one singleton $i \in A$ such that the individual null $H_i$ is also rejected  \citep{gabriel1969simultaneous}. Formally, if the multiple test has decision rule $\Phi$, then it is consonant if for any $A$, there is an index $i \in A$ such that  $ \Phi (p_{A}) \le \Phi (p_i) $. 

Consonance also leads to computationally efficient closed testing rules under some conditions.  \cite{hommel2007powerful} show that consonance leads to a shortcut of order $n$ for closed testing. However, their shortcut is only feasible if one can identify the elementary hypothesis in an efficient way. Unfortunately, this is only known for weighted Bonferroni tests.  \cite{hommel2007powerful} acknowledge this limitation, writing that ``\emph{after rejecting $H_A$ it may sometimes remain difficult to identify an
elementary hypothesis $H_i$, $i \in A$, to be rejected. In such cases the short-cut can still be computer
intensive}''. See also \cite{brannath2010shortcuts} for methods based on local consonance  for restricted hypotheses.

Going back to the main topic of the paper, suppose now that the local test statistics used in closed testing are also consonant. Assuming in addition that they are symmetric and monotone, as in the previous sections, it is easy to see that one obtains an efficient dynamic programming-type shortcut for computing the closed testing method.  Indeed, the closed testing decision rule for the smallest $p$-value reduces to 
$$\Phi_c(p_{(1)}) = \Phi(p_{(1)},\ldots, p_{(n)}).$$ 
To see this, first we notice that the decision rule clearly must include the above factor, i.e., the intersection null with all $n$ $p$-values must be rejected in order for the smallest $p$-value to be rejected. Next, by consonance, if the null with these $p$-values is rejected, then there must be a singleton $j$ that is rejected. Now, since $p_{(1)} \le p_j$, we obtain that $p_{(1)}$ is also rejected. This shows that the decision rule for the smallest $p$-value has the above form. 

With a similar reasoning, we obtain that the decision rule for the second $p$-value has the form 
$$\Phi_c(p_{(2)}) = \Phi(p_{(1)},\ldots, p_{(n)}) \cdot \Phi(p_{(2)},\ldots, p_{(n)}).$$

Therefore, we must only test the null with $p$-values $p_{(2)},\ldots, p_{(n)}$ to compute the decision rule for the second smallest $p$-value. 
Continuing, this shows that an algorithm of complexity $O(ns)$  exists for computing consonant closed tests based on monotone symmetric testing rules. For instance, the Bonferroni rule with $\Phi(p_{1},\ldots, p_{k}) = I(\min p_i \le \alpha/k)$ satisfies these properties. It is also well known that the Hommel procedure is a consonant closed testing procedure  \citep{sonnemann1982allgemeine,sonnemann2008general}. Therefore, consonance and closed testing lead to extremely fast algorithms. 

However, we emphasize that the scope of this paper goes \emph{much beyond} consonance. In this section, we will show here that there are important examples of closed tests based on monotone and symmetric rules that are \emph{not} consonant. We will establish conditions needed for consonance, and then show that specific tests do not satisfy them. 

For this we take a systematic approach. Suppose we are testing $n$ null hypotheses using the closure of monotone combination tests $T(p_J) = \sum_{i\in J}f(p_i)$. Recall that the global rules are $\Phi(p_J) = I(T(p_J) \le c_{|J|})$, where $c_{|J|} = c_{|J|,\alpha}$. We have the following key result, which clarifies the conditions on closures of monotone sums under which we have consonance. We call this result the Consonance-Closure-Monotonicity (CCM) lemma. See Section \ref{pf:ccm} for the proof.

\begin{lemma}[Consonance-Closure-Monotonicity (CCM) lemma]
\label{ccm}
Let $F_k$ be the cdf of an average of $k$ random variables $f(P_i)$, where $P_i$ are independent $p$-values uniformly distributed on $[0,1]$: 
$$F_k(c) = \Pr\left( k^{-1} \sum_{i=1}^k f(P_i) \le c\right).$$
Then the closure of monotone combination tests $\Phi(p_J) = I(T(p_J) \le c_{|J|})$ is consonant if and only if the following two conditions hold:
\benum
\item[(a)] {\bf Level}: Each local test has level $\alpha$: $$F_k(c_k) \le \alpha$$ for all $k \le n$.
\item[(b)] {\bf Sub-linear critical value growth}: The critical values $c_k$ grow at most linearly: $$c_k \le k c_1.$$
\eenum
\end{lemma}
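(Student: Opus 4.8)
The plan is to reduce consonance to a single scalar inequality about the smallest $p$-value, and then to read off conditions (a) and (b) as the validity and the consonance requirements respectively. Fix a set $A$, write $k=|A|$, and after relabeling let $p_1,\ldots,p_k$ be the $p$-values in $A$; since $f$ is monotone increasing, ordering by $p$-value is the same as ordering by $f(p_i)$. The local rule is $\Phi(p_A)=I\{\sum_{i\in A}f(p_i)\le c_k\}$ and a singleton rule is $\Phi(p_i)=I\{f(p_i)\le c_1\}$. By the Symmetry and Monotonicity assumptions, the singleton most likely to be rejected is the one with the smallest $p$-value, because $f(p_i)\le c_1$ is easiest to satisfy for the smallest $f(p_i)$. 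Hence, for each $A$ and each configuration, consonance holds for $A$ exactly when $\Phi(p_A)=1$ forces $f(\min_{i\in A}p_i)\le c_1$, i.e. a global rejection implies the minimal singleton is rejected. This reduction is routine and is the only place the two Assumptions are used.

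First I would dispatch condition (a). Treating the null $p$-values as independent uniforms (the setting in which the combination critical values $c_k$ are calibrated), the probability that the size-$k$ local statistic falls in its rejection region under the intersection null is $F_k(c_k)$. Therefore each local test has level $\alpha$ if and only if $F_k(c_k)\le\alpha$ for all $k\le n$, which is exactly (a). This is the short part; it merely restates validity of the closed test, and it records that $c_1$ is a genuine $\alpha$-level critical value, a fact I reuse below.

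The substance is the equivalence between (b) and consonance, which I would prove through the two contrapositives of the reduced statement. For the direction $(b)\Rightarrow$ consonance, suppose $c_k\le kc_1$ for every $k$ and that no singleton in $A$ is rejected, i.e. $f(p_i)>c_1$ for all $i\in A$. Summing gives $\sum_{i\in A}f(p_i)>kc_1\ge c_k$, so $\Phi(p_A)=0$; thus a rejected $H_A$ always carries a rejected singleton, and the closure is consonant. This is a one-line inequality that needs nothing beyond (b). For the converse I would argue by contradiction: if (b) fails there is a $k$ with $c_k>kc_1$, and I construct a configuration that is globally rejected but has no rejected singleton. The natural choice is to take all $k$ $p$-values equal, with common value chosen so that $f(p_i)=c_1+\epsilon$ for small $\epsilon>0$; then no singleton is rejected, while $\sum_{i\in A}f(p_i)=k(c_1+\epsilon)\le c_k$ once $\epsilon$ is small enough (since $kc_1<c_k$), so $H_A$ is rejected, violating consonance.

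The main obstacle is making this worst-case construction legitimate, that is, showing that $p$-values realizing $f(p_i)$ slightly above $c_1$ actually exist in $[0,1]$. This is exactly where condition (a) re-enters: because $c_1$ is a genuine level-$\alpha$ threshold with $\alpha\in(0,1)$, the inequality $F_1(c_1)\le\alpha<1$ forces $c_1$ to lie strictly inside the range of $f$, in particular strictly below $\sup_{p\in[0,1]}f(p)=f(1)$, so there is room to place $f(p_i)=c_1+\epsilon$ while staying in the unit interval. I would treat the boundary and degenerate cases separately: when $f$ is unbounded (e.g. Fisher's $f(p)=2\ln p$) the upward perturbation is unconstrained, and when $f$ has atoms or flat pieces one replaces the exact preimage of $c_1$ by an appropriate one-sided limit and uses right-continuity of the relevant distribution functions. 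Assembling the two contrapositives with the reduction of the first paragraph yields consonance $\Leftrightarrow$ (b) among level-$\alpha$ tests, and combining this with the level characterization of the second paragraph gives the stated equivalence.
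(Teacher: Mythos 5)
Your proof is correct and takes essentially the same route as the paper's: reduce consonance to the statement that local rejection of $H_J$ must force $f(p_{(1)}) \le c_1$, then use the worst case of all $p$-values equal to extract the necessity of $c_k \le k c_1$. You are in fact slightly more complete than the paper's terse argument, which only spells out the necessity direction, since you also write out the one-line sufficiency step (if $f(p_i) > c_1$ for all $i$ then $\sum_{i} f(p_i) > k c_1 \ge c_k$) and verify, via the level condition, that the equal-$p$-value configuration with $f(p_i) = c_1 + \epsilon$ is actually realizable in $[0,1]$.
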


From the CCM lemma, we can derive several important and interesting results. First, we study the consonance of two specific monotone combination rules, Stouffer's and the Truncated Product Method, and then we give a more general result.  See Section \ref{pf:cor-cons} for the proof. 
\begin{proposition}[Examples of Consonance]
\label{cor-conso}
\benum
\item
The closure of Stouffer's combination, where $T(p_J) = \sum_{i\in J} \Phi^{-1}(p_i)$, is not consonant if 
$$\alpha<1/2.$$
\item The closure of the Truncated Product Method, where $T(p_J) = \sum_{i\in J}\ln(p_i) I(p_i \le \tau)$, is consonant if
$$1-\sqrt{1-\alpha} \le \tau\le \alpha .$$ 
\item More generally, suppose that $f$ is a strictly increasing continuous function such that $\E|f(P)|<\infty$ for a uniform $p$-value $P$. Suppose we are testing $n$ null hypotheses. The closure of monotone combination tests $T(p_J) = \sum_{i\in J}f(p_i)$ based on independent $p$-values can \emph{only be consonant} for all $n$ if 
\beqs
\E f(P) \le f(\alpha).
\eeqs
\eenum
\end{proposition}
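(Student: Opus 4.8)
The plan is to obtain all three statements from the Consonance--Closure--Monotonicity lemma (Lemma \ref{ccm}), which reduces consonance of the closure to its two conditions: the level condition (a) and the sub-linear growth condition $c_k \le k c_1$ in (b). The key preliminary observation is that when $f$ is strictly increasing and continuous, the exact level-$\alpha$ single test rejects exactly when $p \le \alpha$, because $\Pr(f(P) \le f(\alpha)) = \Pr(P \le \alpha) = \alpha$; hence the size-one critical value is $c_1 = f(\alpha)$. With each local test calibrated to exact level $\alpha$, the size-$k$ critical value $c_k$ is the $\alpha$-quantile of $S_k = \sum_{i=1}^k f(P_i)$, so (b) compares this quantile to $k f(\alpha)$.

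For the third, general claim I would argue by contraposition. Suppose $\E f(P) > f(\alpha)$. Since $\E |f(P)| < \infty$, the strong law gives $S_k/k \to \E f(P)$ almost surely, so the average $S_k/k$ degenerates and every one of its quantiles converges to $\E f(P)$; in particular $c_k/k \to \E f(P)$. But condition (b) forces $c_k/k \le c_1 = f(\alpha)$ for all $k$, and letting $k \to \infty$ would then give $\E f(P) \le f(\alpha)$, a contradiction. Hence (b) must fail for all large $k$, and by Lemma \ref{ccm} the closure is not consonant once $n$ is large enough. This shows $\E f(P) \le f(\alpha)$ is necessary for consonance at every $n$. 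The first claim follows immediately: for Stouffer's rule $f = \Phi^{-1}$ is strictly increasing and integrable, the image of a uniform under $\Phi^{-1}$ is standard normal so $\E \Phi^{-1}(P) = 0$, and $f(\alpha) = \Phi^{-1}(\alpha) < 0$ whenever $\alpha < 1/2$; thus $\E f(P) = 0 > \Phi^{-1}(\alpha) = f(\alpha)$ and the necessary condition fails.

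For the second claim, the Truncated Product Method, I would instead verify conditions (a) and (b) directly, invoking the sufficiency direction of Lemma \ref{ccm}. The upper bound $\tau \le \alpha$ secures the level condition (a) at the smallest size, since the size-one TPM test rejects exactly when $p \le \tau$, which has level $\tau \le \alpha$. I expect the lower bound $\tau \ge 1 - \sqrt{1-\alpha}$ to come from the growth condition (b), entering through the smallest nontrivial subsets: it is equivalent to $(1-\tau)^2 \le 1-\alpha$, which says that the atom that $S_2$ places at zero, namely the event that neither of two p-values is truncated, has mass at most $1-\alpha$. This is exactly the regime in which the size-two critical value $c_2$ falls in the strictly negative part of the distribution of $S_2$ and the inequality $c_2 \le 2 c_1$ can be checked; the analogous inequalities at larger sizes should follow because the atom mass $(1-\tau)^k$ of $S_k$ at zero is decreasing in $k$. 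Throughout I would pin down the critical values by conditioning on the number of truncated terms, which is Binomial$(k,\tau)$.

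The main obstacle is precisely this second claim. Because $f(p) = \ln(p)\,I(p \le \tau)$ is flat above $\tau$, it is not strictly increasing, $f(P)$ carries an atom at zero, and the clean identity $c_1 = f(\alpha)$ used in the other two parts is unavailable. The delicate work is therefore the careful treatment of this atom: determining the exact level-$\alpha$ critical values in the presence of the point mass, and verifying $c_k \le k c_1$ for every $k$ rather than only in the asymptotic regime, which is where the two explicit endpoints $1 - \sqrt{1-\alpha}$ and $\alpha$ should emerge.
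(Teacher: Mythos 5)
Your part 3 is essentially the paper's own proof. The paper likewise takes $c_{1,\alpha}=f(\alpha)$ (valid since $\Pr(f(P)\le f(\alpha))=\alpha$ for strictly increasing continuous $f$), observes that consonance forces $\Pr\left(\sum_{i=1}^n f(P_i)\le n f(\alpha)\right)\ge\alpha$, and kills this via the strong law of large numbers when $\E f(P)>f(\alpha)$; your quantile formulation $c_n/n\to\E f(P)$ is the same argument in equivalent language. However, your part 1, derived purely as a corollary of part 3, proves strictly less than what is claimed: the contrapositive of part 3 only yields that consonance fails for all sufficiently large $n$, whereas the paper's proof shows that condition (b) of the CCM lemma fails at \emph{every} size $k\ge 2$, using the closed-form critical values $c_{k,\alpha}=k^{1/2}\Phi^{-1}(\alpha)$, so that $c_{k,\alpha}/k=k^{-1/2}c_{1,\alpha}>c_{1,\alpha}$ whenever $\Phi^{-1}(\alpha)<0$, i.e.\ $\alpha<1/2$. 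Your route is easily patched in the same two lines, precisely because Stouffer's critical values are explicit, but as written the asymptotic argument does not establish non-consonance at a fixed small $n$.

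Part 2 is the genuine gap, and you say so yourself: it is a plan, not a proof, and what you defer is exactly what the paper's proof consists of. The paper handles the atom by confining \emph{all} critical values to the flat interval $(\ln\tau,0)$: there $F_1(c)=\tau$, and more generally $F_k(c)=1-(1-\tau)^k$, since the only way $S_k$ can exceed $\ln\tau$ is that no $p$-value is truncated. The level condition on this interval then ties $\tau$ to $\alpha$, with the binding case $k=2$ producing the endpoint $1-\sqrt{1-\alpha}$ and the cases $k\ge 3$ following from monotonicity of $(1-\tau)^k$ in $k$ --- this last reduction matches your remark that the atom mass decreases with $k$. Crucially, within $(\ln\tau,0)$ the growth condition $c_k\le kc_1$ costs nothing, because the critical values are otherwise unconstrained there. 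This differs from your anticipated mechanism: you expect the endpoint to arise from condition (b), by arguing that $\tau\ge 1-\sqrt{1-\alpha}$ pushes the exact $c_2$ into the continuous part below $\ln\tau$ and then comparing with $2c_1$; the paper instead extracts the endpoint from the level condition (a) and never enters the continuous part, so no conditioning on the Binomial count of truncated terms and no quantile computation for $S_k$ is needed. As it stands, your proposal verifies neither (a) nor (b) for the Truncated Product Method at any size beyond one, so the second claim remains unproven.
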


This result gives clear conditions under which closures of monotone sums are consonant, and gives specific examples. 
In particular, the third statement implies that the closure of Fisher's combination test is not consonant for all $n$ if $\alpha >1/e$. 

The condition above gives a lower bound on the critical value $\alpha$ for which the tests are consonant. The values of interest for us are typically $\alpha = 0.05$ or $\alpha = 0.01$. For these values, however, the classical combination tests (Fisher and Stouffer), are \emph{not} consonant. The closure of the Truncated Product Method is consonant for the choices of $\tau$ specified in Proposition \ref{cor-conso}. The FACT algorithm applies to the closure of Fisher's and Stouffer's test, neither of which are consonant in general. This shows that our FACT algorithm has a  broader scope than consonance. 

A key idea about consonance is \emph{consonantization} \citep{romano2011consonance}, which shows that under some conditions, any closed testing method can be replaced with a consonant one. Specifically, following the above reference, a family $\{H_i \}_{i=1,\ldots,n}$ of hypotheses is called elementary if there is no $i \neq j$ such that $H_i \subset H_j$.  Any closed testing method of elementary hypotheses can be made consonant, in the following way. Suppose $H_K$ is rejected when $\Phi_K=1$. Define 
$$\Phi_K' = \max_{i:i\in K} \prod_{J:i\in J}\Phi_J.$$
Then, the closure of $\Phi_K$ and $\Phi_K'$ reaches the same decisions about $H_i$, and the closure of $\Phi_K'$ is consonant. 

While this is an intriguing idea, it is not clear to us if it can be computed efficiently in a general context. This is an interesting problem that falls beyond our current scope.

\section{Addressing reader's concerns}
\label{conc}

Readers of this paper may have several concerns about the practicality of our methods. We attempt to address them below: 

\benum 
\item Is family-wise error rate (FWER) control feasible for large datasets, or is it too stringent? 

In the introduction, we give several references to important problems where FWER control is the state of the art. In particular, in the top genome-wide association studies (GWAS) published in \emph{Nature} and \emph{Science}, scientists still control the FWER! Thus, FWER is still extremely important. The interested readers may take a look at the following leading GWAS papers: \cite{wellcome2007genome,international2009common}. They all control FWER, and not FDR. See also the review paper in \emph{Nature Genetics} \citep{sham2014statistical}, where FWER control is the gold standard, and false discovery rate (FDR) control is presented as merely a possibility "\emph{rarely performed}". 
 
However, we agree that  family-wise error rate control can be too stringent for ultra-high dimensional data. FACT is not designed for such applications. 

\item What is the price for assuming monotonicity and symmetry? 

Most practical examples of closed testing (Holm, Bonferroni etc) are already symmetric and monotone, so FACT is more general than them. So, from the point of view of using more restricted methods, in most cases, we do not need to restrict the methods that we use. From a computational perspective, however, if we directly use the FACT algorithm, it can be slower than the specific shortcuts. Therefore, in each specific case, we should still use the specific shortcuts.

There are also examples of closed testing that are not symmetric, especially the graphical approaches of \cite{bretz2009graphical}, and other methods popular in clinical trials (gatekeeping approaches etc). It is possible, though beyond our scope, to conduct some simulation studies comparing FACT with those methods.

\item How can we choose symmetric and monotone tests? And how much difference does it make in practice?

There are multiple ways of choosing tests, but that is a strength of the method, and not a weakness. It gives flexibility to the user. We outline one specific approach, the Simes-HC fusion, which is appropriate for the very important and common problem of sparse signals. There are always multiple ways to construct test statistics, and the optimality theory for multiple testing is not that well-developed. So I think that it would be quite difficult to choose optimal tests theoretically. 

However, it would certainly be possible to conduct simulation studies comparing different test statistics. Regarding how much difference it makes in practice, we think it is not possible to answer all questions in one paper, due to the space limitations. But this can certainly be investigated in follow-up work.  

\item How can we guess the sparsity level $s$?

We agree that guessing $s$ is not trivial. However, in the simulations we have found that the method is quite robust to the choice of $s$. This question deserves further attention, and it can certainly be investigated in simulations. 

Another relatedd question is if the
adaptive Simes-HC rule provably dominates both Simes algorithm and HC procedure? This is clearly true in simulations, but it would be valuable to have theoretical results.

\item How can we perform accurate $p$-value calculations for the HC local tests?

Obtaining accurate $p$-values for HC is a problem of ongoing research interest. For accurate asymptotics, see \cite{li2015higher}. For finite-sample problems, see \cite{barnett2014analytical}. For correlated $p$-values, see for instance the pioneering work by \cite{hall2010innovated}, and follow-ups.
\eenum 

\section{Proofs}

\subsection{Review of the well-known argument that CT controls the FWER}
\label{fw}
This is the well-known argument that CT controls the FWER. let $I_0$ be the set of all true nulls. Then, for all $i\in I_0$: 
$$\Phi_c(p_i)
 = \prod_{J: i\in J }\Phi(p_J)
 \le \Phi(p_{I_0}),$$
where in the first step we have used the definition of CT (reject individual null if all sets containing it are rejected), and that all rejection rules take values in $\{0,1\}$, while in the second step we have used that $I_0$ is among the sets $J$ containing $i$. We conclude that if $H_i$ is rejected, then so is $H_{I_0}$. Since the probability of rejecting $H_{I_0}$ is at most $\alpha$, this shows that CT controls the FWER. 
\subsection{Proof of Main Theorem}

Consider the closed testing method. To decide whether or not we reject the $k$-th hypothesis, we must decide for every subset containing $k$ whether or not it is rejected based on the local testing rule. Now consider subsets of a fixed size $j$ containing $k$. By assumption, for each of these subsets, we use the same local testing rule $T_j$. Also by assumption, these testing rules are symmetric and monotone. It follows that all subsets are rejected if and only if the ``worst'' one is rejected. The ``worst one'' has the largest $j - 1$ $p$-values excluding $p_k$. 

We can formalize this intuition as follows. Let $p^{-k}_{(1)} \le \ldots \le p^{-k}_{(n)}$ be the sorted $p$-values excluding $p_k$. Recall that the decision rule $\Phi_c$ for the $k$-th null is
$$\Phi_c(p_k) = \prod_{J: k\in J}\Phi(p_J).$$
We can write this as a product over subsets of each possible size $j$. By the above discussion, the $j$-th term equals $\Phi(p_k, p^{-k}_{(n-j+1)},\ldots, p^{-k}_{(n)})$. Therefore, the entire decision rule for the $j$-th term has the form

$$\Phi_c(p_k) = \Phi(p_k) \cdot \Phi(p_k, p^{-k}_{(n)}) \cdot \ldots \cdot \Phi(p_k, p^{-k}_{(1)},\ldots, p^{-k}_{(n)}).$$

Next, we observe that if $p_k$ is  less than or equal to $p_l$, then the term $\Phi(p_k, p^{-k}_{(a)},\ldots, p^{-k}_{(n)})$ is less than or equal to $\Phi(p_l, p^{-l}_{(a)},\ldots, p^{-l}_{(n)})$. This simply means that if we reject all subsets of a fixed size containing the $l$-th hypothesis, then we also reject all subsets of the same size containing the $k$-th hypothesis. Therefore, if we do not reject the ``worst'' subset of size $j$ (say) for $p_{(k)}$, we do not reject the ``worst'' subset for any $p_{(l)}$ with $l \ge k$. 

This shows that the following algorithm is equivalent to the closed testing method. Start by sorting the $p$-values. For each $p$-value $p_{(k)}$ starting with the smallest one, check if the ``worst'' subset of each size $j$ is rejected using the local testing rule $T_j$. If any such subset is not rejected, stop, and reject the hypotheses with the $p$-values $p_{(1)}, \ldots, p_{(k-1)}$. This agrees with the FACT algorithm, showing its correctness. 

Finally, we study the computational cost of the algorithm. The initial sort takes $O(n \log(n))$ steps. Then at step $k$, the cost is at most $$t_1+t_2+\ldots+t_{n-k+1},$$ where $t_i$ is the cost of applying $T_i$ to a size $i$ subset. For the total cost, suppose we reject $k$ out of $n$ hypotheses. Then the total cost is at most 
 
 $$C_k = O\left(n\log(n)+k\left[t_1+t_2+\ldots+t_{n-k}\right]+\sum_{j<k} jt_{n-j+1}\right) $$
 
  
For instance, if $t_n = O(n^c)$ with $c\ge 1$, then we get a total cost $O(kn^{c+1})$. When $t_i = O(i)$, i.e., when applying the test takes linear time, the cost is $C_k = O(kn^2) $. This finishes the proof.

\subsection{Proof of Lemma "Constructing monotone symmetric test statistics"}
\label{pf:cl_lem}

We first study the properties of the rejection regions of monotone symmetric sets. Then, we apply these results to prove the current result. Consider a monotone symmetric rule $\Phi:[0,1]^n\to\{0,1\}$. Let $A$ be the rejection region,  $A = \{p:\Phi(p) = 1 \}$. Note that $A \subset [0,1]^n$  must of course be Borel measurable. Then monotonicity and symmetry of $\Phi$ are equivalent to the same properties of $A$:

\benum
\item {\bf Set Monotonicity}: $p\in A$ implies $q\in A$, if on all coordinates $i$, $q_i \le p_i$. 
\item {\bf Set Symmetry}: $p = (p_1,\ldots,p_n) \in A$ implies $p_\pi = $ $(p_{\pi(1)},\ldots,$ $p_{\pi(n)}) \in A$ for any permutation $\pi$ of $\{1,\ldots,n\}$. 
\eenum

Let $\A$ be the collection of all such sets. What properties does $\A$ have?

\begin{lemma}[Union]
\label{union_lem}
The union of any collection of monotone symmetric sets is monotone symmetric.
\end{lemma}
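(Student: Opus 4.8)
The logical content of this lemma is immediate, so the plan is to verify the two defining properties of a monotone symmetric set directly for the union $A = \bigcup_{i \in I} A_i$, using the fact that each property is an implication quantified over \emph{single} points. The only genuine subtlety, addressed at the end, is measurability when the index set $I$ is uncountable.

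First I would check Set Monotonicity. Fix $p \in A$ and $q$ with $q_i \le p_i$ on every coordinate. By definition of the union there is some $i \in I$ with $p \in A_i$; since $A_i$ is monotone, $q \in A_i \subseteq A$. Next I would check Set Symmetry in the same style: for $p \in A$ and any permutation $\pi$, pick $i$ with $p \in A_i$, and then $p_\pi \in A_i \subseteq A$ by the symmetry of $A_i$. In both cases the key observation is that membership of a point in a union is \emph{witnessed} by a single member set, on which the required hypothesis is already available, so no interaction between the different $A_i$ is needed and the argument is purely formal.

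The step requiring actual care is measurability: an arbitrary, possibly uncountable, union of Borel sets need not be Borel, so I cannot simply invoke closure of a $\sigma$-algebra under unions. Instead I would exploit the structure just established — namely that $A$ is again downward-closed and symmetric — together with the classical fact that every monotone (downward-closed) subset of $[0,1]^n$ is Lebesgue measurable. Concretely, the topological boundary of such a set is contained in the graph of a coordinatewise-monotone threshold surface and is therefore Lebesgue-null, so $A$ differs from the open set $\mathrm{int}(A)$ by a null set and is measurable. I expect this measure-theoretic point to be the main (and essentially only) obstacle; the order-theoretic part is trivial. If one prefers to sidestep it, one can note that in all applications in this paper the relevant unions are countable, in which case Borel measurability of $A$ is preserved automatically.
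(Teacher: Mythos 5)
Your argument is correct and matches the paper's proof essentially verbatim: membership of a point in the union is witnessed by a single $A_i$, whose monotonicity (respectively symmetry) then places $q$ (respectively $p_\pi$) in $A_i \subseteq A$, with no interaction between the member sets needed. Your measurability addendum goes beyond the paper, which simply stipulates Borel measurability of rejection regions in its setup and does not revisit the issue for unions; your observation is sound (the boundary of a downward-closed set meets each diagonal line in at most one point, hence is Lebesgue-null), though note it delivers Lebesgue rather than Borel measurability, so it repairs rather than literally preserves the paper's stated convention in the uncountable case.
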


\begin{proof}
Let $A_i$, $i \in I$, be a collection of monotone symmetric sets, and suppose $x\in \cup_i A_i$. Let any $y \le x$, where inequality is meant coordinate-wise, i.e., $y_i \le x_i$ for all $i$. Now, we must have $x\in A_i$ for some $i\in I$. Then, since $A_i$ is monotone, we have $y\in A_i$, hence $y \in \cup_i A_i$. Moreover, since $A_i$ is symmetric, we have $x_\pi \in A_i \subset \cup_i A_i$ finishing the proof.
\end{proof}

\begin{lemma}[Intersection]
\label{int_lem}
The intersection of any collection of monotone symmetric sets is monotone symmetric.
\end{lemma}

The proof is similar to the previous one, and hence omitted. 

For any $a \in [0,1]^n$, let us denote the hyper-rectangle with opposite vertices $0$ and $a$ by $H(a)$. Thus $$H(a) = \{x: 0\le x_i\le a_i, i=1,\ldots,n\}.$$ Clearly, $H(a)$ is a monotone set. Let moreover $H_\pi(a)$ be the symmetrization of $H(a)$, that is $$H_\pi(a) =  \cup_{\pi \in S_n} H(a_\pi).$$ Thus $H_\pi(a)$ is monotone and symmetric. Moreover, $H_\pi(a)$ is a minimal monotone symmetric set, in the sense that if $a$ belongs to a monotone symmetric set, then $H_\pi(a)$ must also belong to it. 

\begin{lemma}[Representation]
A set $A$ is monotone symmetric if and only if it can be written as $$A = \cup_{a \in I} H_\pi(a),$$ for some measurable set $I \subset [0,1]^n$.
\end{lemma}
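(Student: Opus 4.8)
The plan is to prove the two implications separately, with the reverse direction (representation $\Rightarrow$ monotone symmetric) following almost immediately from the lemmas already established, and the forward direction built around the single observation that one may take $I = A$. For the reverse direction, suppose $A = \cup_{a \in I} H_\pi(a)$ for some measurable $I$. Each set $H_\pi(a)$ has already been noted to be monotone and symmetric, so $A$ is a union of monotone symmetric sets and is therefore monotone and symmetric by the Union Lemma (Lemma~\ref{union_lem}).

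For the forward direction, suppose $A$ is monotone symmetric and set $I = A$, which is measurable by hypothesis. I would then claim $A = \cup_{a \in A} H_\pi(a)$ and verify both inclusions. The inclusion $A \subseteq \cup_{a \in A} H_\pi(a)$ holds because every $x \in A$ satisfies $x \in H(x) \subseteq H_\pi(x)$, and $H_\pi(x)$ appears in the union since $x \in A$. For the reverse inclusion, fix any $a \in A$; since $a$ belongs to the monotone symmetric set $A$, the minimality property of $H_\pi(a)$ stated just before the lemma gives $H_\pi(a) \subseteq A$, and taking the union over $a \in A$ yields $\cup_{a \in A} H_\pi(a) \subseteq A$. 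This establishes the equality, so the core of the argument is nothing more than self-membership of $x$ in $H_\pi(x)$ together with the minimality of $H_\pi$.

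The one point I expect to require genuine care is the measurability of $A$ in the reverse direction: the Union Lemma guarantees monotonicity and symmetry but says nothing about measurability, and an arbitrary (possibly uncountable) union of the closed sets $H_\pi(a)$ need not be Borel. I would resolve this by exploiting the monotone structure: $A$ is downward-closed, and downward-closed (coordinatewise monotone) subsets of $[0,1]^n$ are Lebesgue measurable, since their indicator is monotone in each coordinate. Alternatively, writing $A$ as the projection $\{x : \exists a \in I,\ (x,a) \in D\}$ of the closed set $D = \{(x,a) : x \le a_\pi \text{ for some permutation } \pi\}$ intersected with $[0,1]^n \times I$ exhibits $A$ as an analytic set, hence universally measurable. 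Either route handles the technicality; the representation itself is a direct consequence of minimality and self-membership, so I view this measurability check as the only real obstacle.
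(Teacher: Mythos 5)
Your proof is correct and takes essentially the same route as the paper, which likewise handles the reverse direction via the Union Lemma and the forward direction by taking $I=A$ (the paper simply says this holds ``clearly,'' while you fill in the detail via self-membership $x \in H(x) \subseteq H_\pi(x)$ and the minimality of $H_\pi(a)$). Your measurability discussion is a legitimate refinement of a point the paper glosses over, but it does not change the argument.
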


\begin{proof}
Clearly, if this representation holds, then by the union lemma, Lemma \ref{union_lem}, $A$ is monotone symmetric. On the other hand, if $A$ is monotone symmetric, then clearly $A$ has the given representation, with $I=A$. This finishes the argument. 
\end{proof}

\subsection{Final proof of Lemma "Constructing monotone symmetric test statistics"}
Clearly,  a test statistic $T$ is monotone and symmetric if and only if all its sub-level sets $S_T(c) = I(T(p) \le c)$ are monotone symmetric in the sense of sets, as defined above. 

Now, for the first part, let $T=\min_{i \in I} T^i$ be a minimum of an arbitrary measurable collection of monotone symmetric test statistics. We notice that $\min_{i \in I} T^i \le c$ iff for some $i\in I$ we have $T^i \le c$. Thus $$S_T(c)  = \cup _{i \in I} S_{T^i}(c).$$ Thus, $T$ is monotone symmetric the union lemma, Lemma \ref{union_lem}. 

Similarly, for the second part let $T=\max_{i \in I} T^i$ be a maximum of an arbitrary measurable collection of monotone symmetric test statistics. Clearly, $\max_{i \in I} T^i \le c$ iff for all $i\in I$ we have $T^i \le c$. Thus $S_T(c)  = \cap _{i \in I} S_{T^i}(c) $. Thus, this property follows from the intersection lemma, Lemma \ref{int_lem}.

For the third part, it is clear from the original definition for test statistics that monotonicity and symmetry are preserved under non-negative combinations. 

For the fourth part, suppose that on all coordinates $i$, $p_i \le p'_i$. Then, by monotonicity of the $T^j$, $T^j(p) \le T^j(p')$ for all $j$, so that by monotonicity of $g$, $$g(T^1(p),\ldots,T^k(p)) \le g(T^1(p'),\ldots,T^k(p')).$$ This shows that $g(T^1,\ldots,T^k)$ is a monotone test statistic. Next, to see the symmetry, we notice that the value of each $T^j$ is unchanged under permutations. Therefore, the value of $g$ is also unchanged, finishing the proof.



\subsection{Proof of Proposition \ref{fct_bonf}}
\label{pf:fct_bonf}

In general, the decision rule for the hypothesis corresponding to the first $p$-value equals
$$\Phi(p_{(1)}) \cdot \Phi(p_{(1)}, p_{(n)}) \cdot \ldots \cdot \Phi(p_{(1)}, p_{(2)},\ldots, p_{(n)}).$$

Since we are working with the Bonferroni method, the $i$-th factor equals $I(p_{(1)} \le \alpha/i)$. Since the thresholds are decreasing in $i$, the last rule is the most stringent one. Thus, we obtain the simplification that the decision rule for the smallest $p$-value equals
$$I(p_{(1)} \le \alpha/n).$$

If we have rejected the hypothesis corresponding to the smallest $p$-value, we continue and examine the second smallest one. This is rejected based on the rule 
$$\Phi(p_{(2)}) \cdot \Phi(p_{(2)}, p_{(n)}) \cdot \Phi(p_{(2)},p_{(n-1)}, p_{(n)})  \ldots \Phi(p_{(2)}, p_{(3)},\ldots, p_{(n)}). $$

As above, the $i$-th factor equals $I(p_{(2)} \le \alpha/i)$. Since the thresholds are decreasing in $i$, the last rule is the most stringent one. Thus, we obtain that the second decision is based on
$$I(p_{(2)} \le \alpha/(n-1)).$$
Continuing similarly, we obtain that the decision to reject the hypothesis corresponding to $p_{(i)}$ is only considered if the hypotheses corresponding to the smaller $p$-values are all rejected. The hypothesis is then rejected if $p_{(2)} \le \alpha/(n-i+1)$, which agrees with Holm's method. This shows that the steps of the FACT algorithm recover Holm's method.

\subsection{Proof of Proposition \ref{fct_simes}}
\label{pf:fct_simes}

The reasoning proceeds from the last loop of the FACT algorithm sequentially towards the first loop. We show that the decision made at each loop matches the decision at the corresponding step of Hommel's procedure. We view Hommel's procedure as a sequential algorithm proceeding from $j=0$ to $j=n$. At the $j$-th step, we check if $p_{(n-j+k)}> k\alpha/j$ for all $k \le j$. If this is the largest $j$ with this property, we reject all $H_i$ with $p_i \le \alpha/j$. 

Thus, consider the last outer loop of the FACT algorithm, where $k=n$. There, if $p_{(n)} \le \alpha$, all hypotheses are rejected. This matches the first step in Hommel's procedure, where $j=0$. Indeed, in that case, there is no $j$ such that $p_{(n-j+k)}> k\alpha/j$ for all $k \le j$. This means, for $j=1$, that $p_{n}\le \alpha$. 

However, we also need to argue that if $p_{(n)} \le \alpha$, then the FACT algorithm indeed arrives at the $n$-th loop, and does not stop before that. This is indeed true, because if  $p_{(n)} \le \alpha$, then each of the conditions in each of the previous $n-1$ loops are fulfilled. This shows that the first steps of the two algorithms agree.

It remains to understand the case where $p_{(n)}> \alpha$. In this case, the FACT algorithm stops before the $n$-th outer loop. As before, there are two conditions for it to stop at the $n-1$-st outer loop. First, both of the conditions 
$$\{ p_{(n-1)} \le \alpha/2 \} \cup \{ p_{(n)} \le \alpha \}$$ 
and $\{ p_{(n-1)} \le \alpha \}$ must be satisfied. Since $p_{(n)}> \alpha$, this means that the required condition is $ p_{(n-1)} \le \alpha/2$.  The second condition is that the algorithm must not stop before the $n-1$-st outer loop. However, we claim that the condition $ p_{(n-1)} \le \alpha/2$ guarantees that. 

Indeed, consider a previous outer loop, say the $k$-th one with $k\le n-2$. Then, we claim that for any inner loop $j$, there is a $p$-value satisfying the Simes constraint. Indeed, consider first the loops $j$ for which $ p_{(n-1)}$ is one of the $p$-values considered in $p_{(k)},p_{(j+1)},\ldots,p_{(n)}$. Thus, $j+1 \le n-1$, or $j\le n-2$. In this case, the threshold to which $ p_{(n-1)}$ is compared in the Simes test is $\alpha \cdot (n-j)/(n-j+1)$. Since this is greater than $\alpha/2$, we have that $$p_{(n-1)}\le\alpha \cdot (n-j)/(n-j+1),$$ and so the Simes constraint is satisfied in this case. 

Consider next the loops $j$ for which $j\ge n-1$. In this case, the threshold to which $ p_{(1)}$ is compared in the Simes test is $\alpha /(n-j+1)$. Since this is greater than or equal to $\alpha/2$, and $p_{(1)} \le p_{(n-1)}$ we have that the Simes constraint is satisfied in this case.

In conclusion, we have shown that the FACT algorithm stops at the $n-1$-st loop precisely when $ p_{(n)} > \alpha$ and $ p_{(n-1)} \le \alpha/2$. The algorithm stops before this loop when $ p_{(n-1)} > \alpha/2$. This agrees with the second step in Hommel's algorithm. 

By a similar inductive argument, we obtain that the FACT algorithm stops at the $n-j+1$-st loop precisely when 

$$p_{(n)} > \alpha, p_{(n-1)} > \alpha\cdot (j-1)/j, \ldots, \textnormal{and } p_{(n-j+1)} \le \alpha/j.$$ 

This agrees with the $j$-th step in Hommel's algorithm, and finishes the proof.

\subsection{Proof of Proposition "GST-Sym-Mon"}
\label{pf:gst}

Let us write $d_i = d_{|J|,i}$ for simplicity. Define the regions $$S_i(d_i)  = \{p: p_{(i)} \le d_{i} \}.$$
The acceptance region of the Generalized Simes Test equals $S=\cup_i S_i(d_i)$. The order statistics are monotone symmetricy, hence the sets $S_i(d_i)$ are monotone symmetric. So by the closure properties (CCM Lemma) of monotone symmetric sets, $S$ is monotone symmetric, finishing the proof.

\subsection{Proof of CCM Lemma}
\label{pf:ccm}

For a test of the form $\Phi(p_J) = I(T(p_J) \le c_{|J|,\alpha})$, consonance requires that if $T(p_J) \le c_{|J|,\alpha}$, then there is an index $i \in J$ such that $T(p_i) \le c_{1,\alpha}$. For monotone combination tests $T(p_J) = \sum_{i\in J} f(p_i)$ with $k = |J|$, consonance requires that if $$\sum_{i\in J} f(p_i) \le c_{k,\alpha},$$ then $f(p_{(1)}) \le c_{1,\alpha}$. 

Now, in the worst case, we can take the $p$-values to be equal. Thus, consonance can only hold for all $p$-values if 
$\frac{c_{k,\alpha}}{k}\le c_{1,\alpha}.$
This finishes the proof. 

\subsection{Proof of Proposition \ref{cor-conso}}
\label{pf:cor-cons}
\benum
\item 
It is easy to see that the condition $c_{k,\alpha}\le k c_{1,\alpha}$ does not hold for Stouffer's test. We have $$c_{k,\alpha} = k^{1/2} \Phi^{-1}(\alpha),$$ thus $$c_{k,\alpha}/k = k^{-1/2} \cdot c_{1,\alpha}.$$ Since $c_{1,\alpha} = \Phi^{-1}(\alpha)<0$ if $\alpha <0$, we thus have $c_{k,\alpha}/k > c_{1,\alpha}$.  This finishes the proof of this claim. 

\item For the truncated product method, we have $f(x) = \ln(x) I(x \le \tau)$. For the critical value for one test, $c_1 = c_{1,\alpha}$, we need that 
$$F(c_1) = \Pr\left( \ln(P) I(P \le \tau) \le c_1 \right) \le \alpha.$$
Now, taking $c_1 = \ln  \tau$, we see that this probability equals $\tau$. Since the value of the random variable $\ln(P) I(P \le \tau)$ equals zero for $P>\tau$, it follows that  $F(c_1) = \tau$ for any $c_1 \in (\ln  \tau,0)$. Since we are interested in the regime where $\tau \le \alpha$, it follows that we can take any $c_1 \in (\ln  \tau,0)$ and the level condition is satisfied for subsets $J$ of size one. 

Now consider subsets of size two. Let $X_i = \ln(P_i) I(P_i \le \tau)$ and $c_2 = c_{2,\alpha}$ be the appropriate critical value. We need that 
$$F_2(c_2) = \Pr\left( X_1+X_2  \le c_2 \right) \le \alpha.$$
Since $X_i\le 0$, we have that $X_1+X_2=0$ only if $X_1=0$ and $X_2=0$. The probability of this event is $(1-\tau)^2$. Moreover, the next largest value that $X_1+X_2$ can take with positive probability equals $\ln \tau$, which happens when one $X_i=0$ and the other $X_j = \ln \tau$. Therefore, we conclude that 
$$F_2(c_2) = 1 - (1-\tau)^2$$
for all $c_2 \in (\ln\tau,0)$. For this to be at most $\alpha$, we need precisely that $1-\sqrt{1-\alpha} \le \tau$, which is the required condition. 

For consonance, it remains to show that one can choose $c_2$ such that $c_2 \le 2c_1$. This is clear, because $c_1,c_2$ are only constrained to be in $(\ln\tau,0)$.

Finally, consider subsets of size $k$. Similarly to above, we derive that 
$$F_k(c_k) = 1 - (1-\tau)^k$$ for all $c_k \in (\ln\tau,0)$. Thus, the level requirement translates to $$1-(1-\alpha)^{1/k} \le \tau.$$ Since $x \to x^{1/k}$ is increasing for $x\in(0,1)$, this condition is implied by the one for $k=2$. Moreover, for consonance, we need $c_k \le kc_1$. This holds similarly to the case $k=2$, finishing the proof.

\item 
One can check that $ c_{1,\alpha} = f(\alpha)$ is a valid choice of a critical value for strictly increasing continuous $f$. Indeed, $\Pr(f(P_i)\le f(\alpha))=\Pr(P_i\le \alpha)= \alpha$. Thus, consonance requires equivalently that 
$$p_n := \Pr\left(\sum_{i=1}^n f(p_i) \le n f(\alpha)\right) \ge \alpha.$$
By the law of large numbers, $n^{-1}\sum_{i=1}^n f(p_i) \to \E f(P)$ almost surely, thus if $\E f(P) > f(\alpha)$, then $\lim\sup_n p_n =0$. Therefore, consonance for all $n$ requires that $\E f(P) \le f(\alpha)$, which finishes the proof. 
\eenum

\section{Numerical experiments}
\label{sec:numerical}

We perform numerical simulations to understand and compare the behavior of our proposed methods.

\subsection{Fusion rules are more powerful}

\begin{figure}
\begin{subfigure}{.32\textwidth}
  \centering
  \includegraphics[scale=0.35]{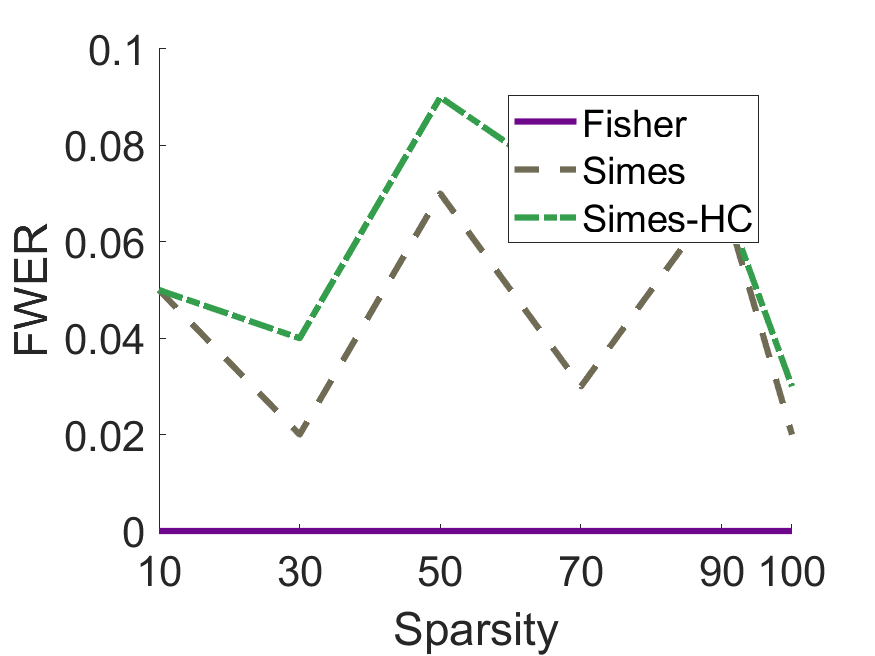}
  \caption{$M =  0$}
\end{subfigure}
\begin{subfigure}{.32\textwidth}
  \centering
  \includegraphics[scale=0.35]{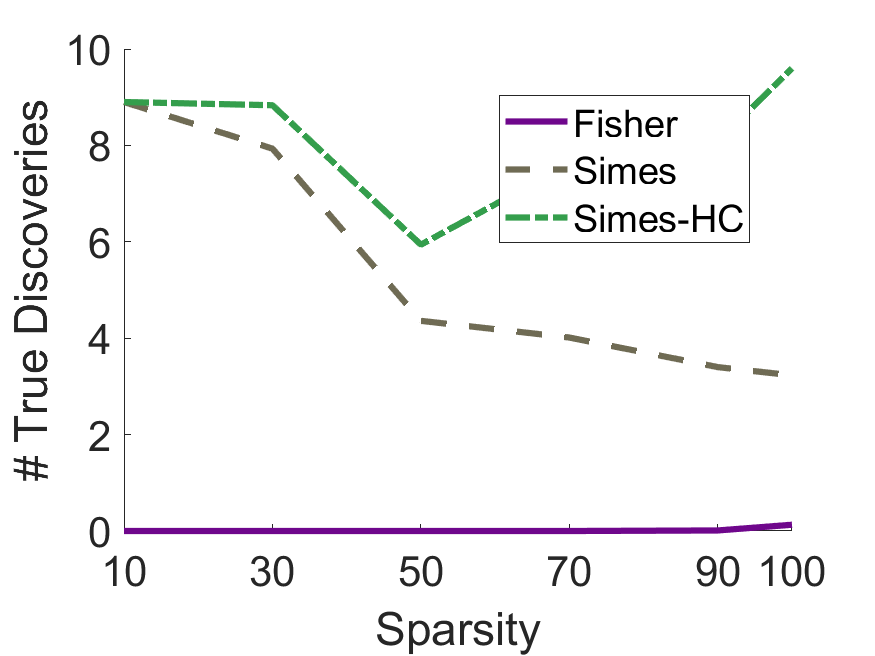}
  \caption{$M =  1$}
\end{subfigure}
\begin{subfigure}{.32\textwidth}
  \centering
  \includegraphics[scale=0.35]{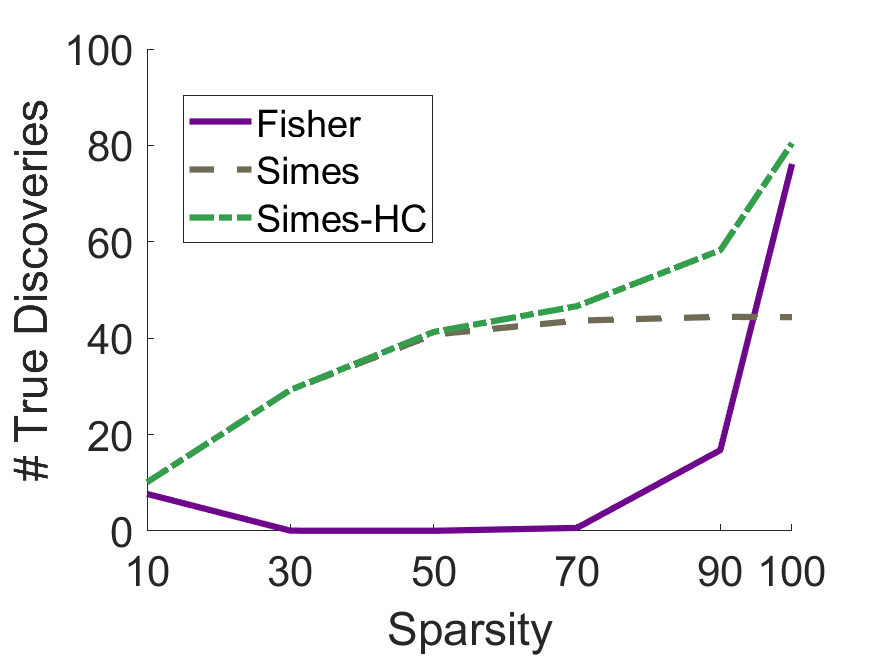}
  \caption{$M =  2$}
\end{subfigure}
\caption{FWER and average number of hypotheses rejected by the FACT method for the closure of Fisher's method, Simes' method, and the Simes-HC fusion rule. The displays are as a function of the signal strength $M$ and sparsity $s$.}
\label{fig:1}
\end{figure}

We compare the closure of Fisher's method, Simes' method, and FACT with the Simes-Higher Criticism (Simes-HC) fusion rule. The reason for performing this simulation is that  we would like to understand under what condition the new fusion rule can perform better than Simes method or Fisher's method.  We do not compare Holm's method, because the closure of Simes' method is more powerful. 
 
In the simulation, we use the normal means model,  where the data $X_i$ is independent and normally distributed with $X_i \sim \N(\mu_i,1)$ for $i=1, \ldots ,100$. The null hypotheses considered are that $\mu_i=0$. The alternative hypotheses are that $\mu_i>0$.

We consider both sparse and dense models. We change the sparsity---the number of nonzero $\mu_i$ effect sizes---of the model on a grid from zero to 100. We tune the effect sizes so that the power is in a non-trivial regime, and comparable across the different sparsities. Specifically, it is well known that the chi-squared  test for a global null based on $k$ normal means behaves approximately as $$\N(k+|\mu|_2^2, 2k)$$  for large $k$, where $|\mu|_2$ is the Euclidean norm of the vector of means. Therefore we choose our effect sizes in the following way. For a  global effect size $M$, and sparsity $s$, we set each nonzero effect size to be equal to $\mu_i = (2p/s)^{1/2}M$. This ensures that the Euclidean norm of the effects is the same for different sparsities. We take the signal strength $M$ to be 0, 1, and 2, respectively. 

We then run the Fast Closed Testing method using various local testing methods. We always use the global significance level $\alpha = 0.05$.  
We average the results over 100 independent Monte Carlo trials. 

We compare the closure of Fisher's method, Simes' method, and the FACT algorithm using the Simes-Higher Criticism (Simes-HC) fusion rule. For the Simes-HC rule, we assume that the preliminary estimate of the sparsity is correct. The results are displayed in Figure \ref{fig:1}. We observe that all methods empirically control the family-wise error rate, up to random sampling error. This can be seen on the left plot showing the FWER, where $M=0$, so that we are under the global null. On the remaining plot, we show the expected number of true discoveries. 

We observe that the closure of the Simes-HC fusion rule is more powerful than the other methods. First, the closure of Fisher's method only has power against fully dense alternatives, and only when the effect sizes are large ($M=2$). Second, as seen in the middle plot, the closure of Simes' method can lose power when the sparsity increases.

Finally, FACT with the Simes-HC rule has more power than the other methods. This can be seen best on the rightmost plot, where Simes HC tracks Simes for sparse alternatives, and tracks Fisher for dense alternatives. This shows that the FACT method, when used with the appropriate local testing rules, can be more powerful that the closure of a minimum or sum-based local testing rule, and demonstrates the power of our approach.

\subsection{Robustness to mis-specifying the sparsity}

\begin{figure}
\begin{subfigure}{.32\textwidth}
  \centering
  \includegraphics[scale=0.35]{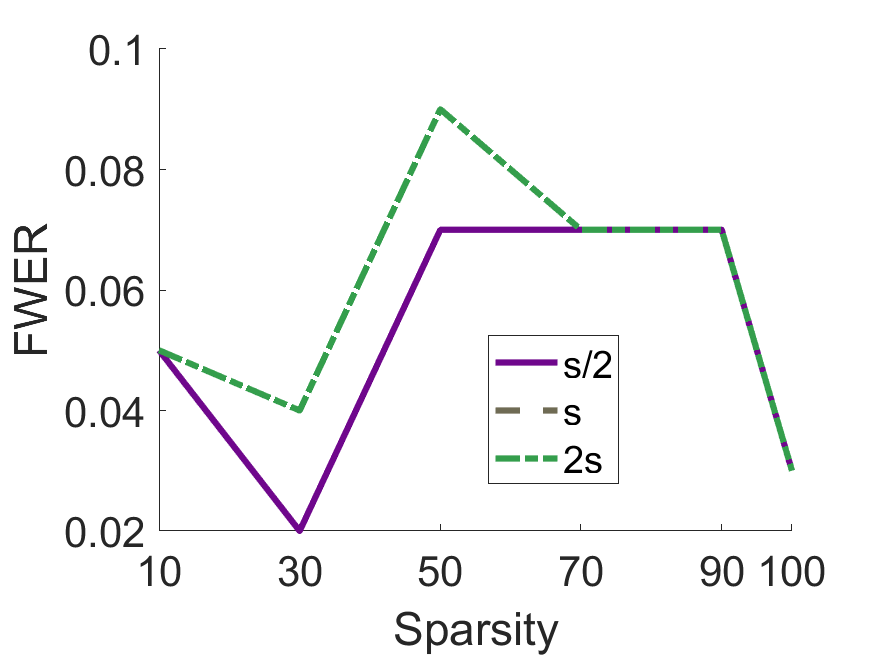}
  \caption{$M =  0$}
\end{subfigure}
\begin{subfigure}{.32\textwidth}
  \centering
  \includegraphics[scale=0.35]{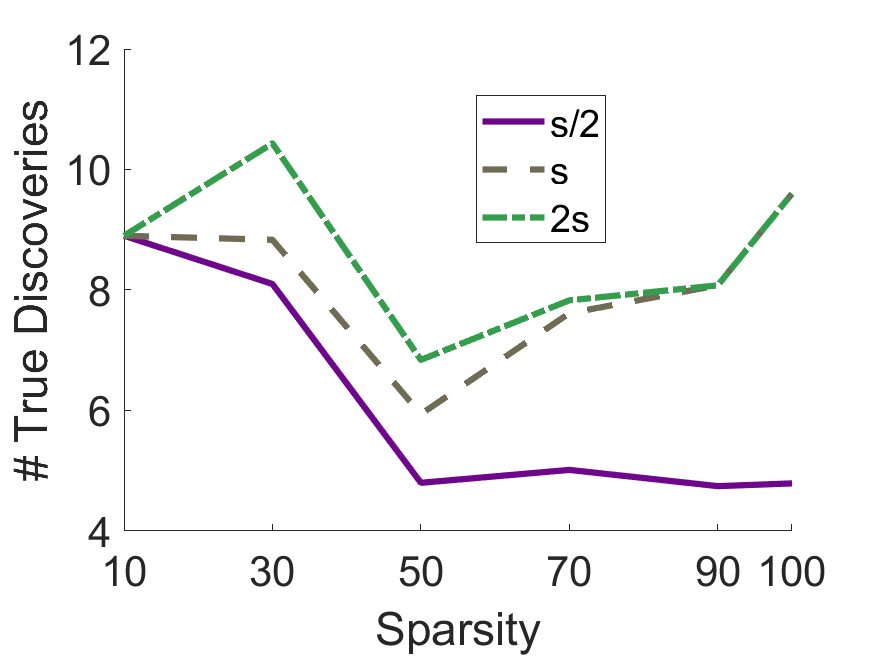}
  \caption{$M =  1$}
\end{subfigure}
\begin{subfigure}{.32\textwidth}
  \centering
  \includegraphics[scale=0.35]{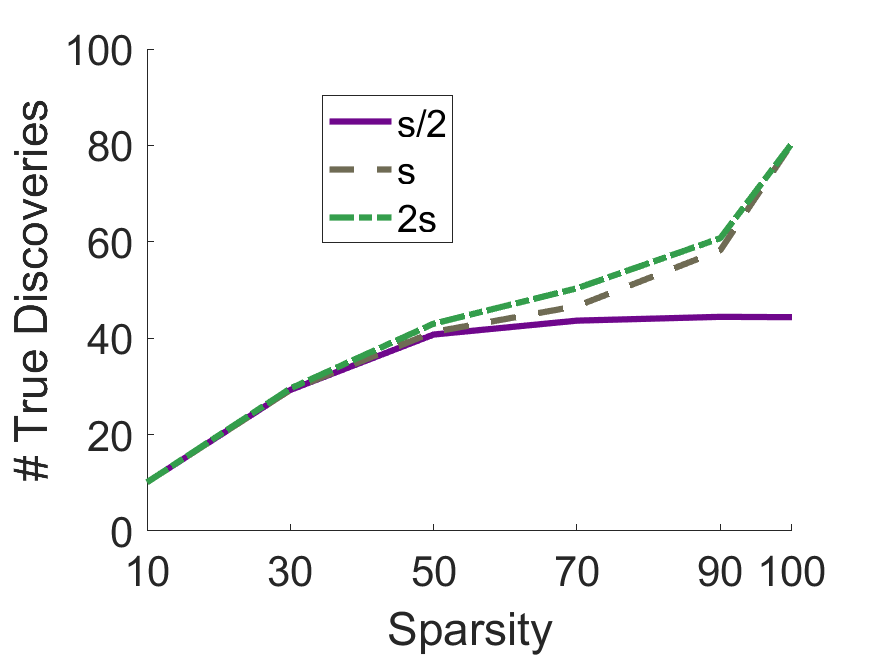}
  \caption{$M =  2$}
\end{subfigure}
\caption{FWER and average number of hypotheses rejected by the FACT method with the Simes-HC fusion rule, for potentially mis-specified sparsity levels. The displays are as a function of the signal strength $M$ and sparsity $s$.}
\label{fig:2}
\end{figure}

We examine the robustness of the Simes-HC fusion rule to the sparsity tuning parameter. The reason for performing this simulation is that we would like to understand how the performance of the method depends on our prior guess for the sparsity $s$. We use the same simulation setup as in the previous section. We now run the Simes-HC rule with sparsity tuning parameters equal to $s/2,s$, and $\min(2s,n)$. The results are displayed in Figure \ref{fig:2}.  

We observe that the method is not too sensitive to the sparsity tuning parameter.   Regardless of the value of that parameter, the Simes-HC rule always has good power. Underestimating the sparsity seems to lead to a smaller number of rejections. This suggests that in practice one should use over-estimates of the sparsity for robust performance. A more detailed investigation is beyond our current scope.

We also show that the FACT method to with the Simes-HC fusion rule is quite robust to correlated test statistics (Section \ref{sec:corr}).

\subsection{Robustness to correlated test statistics}\label{sec:corr}

\begin{figure}
\begin{subfigure}{.5\textwidth}
  \centering
  \includegraphics[scale=0.4]{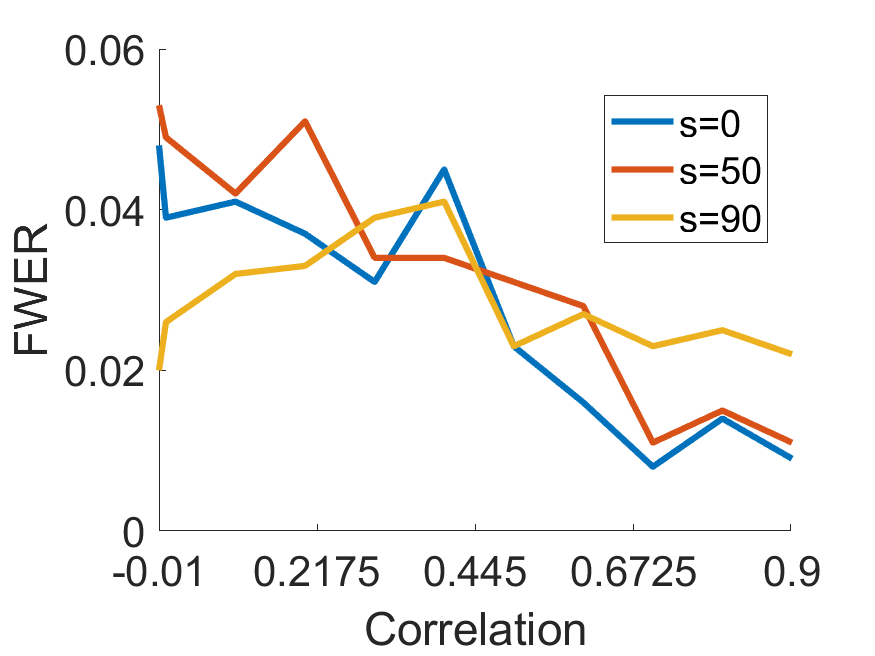}
  \caption{Spiked model}
\end{subfigure}
\begin{subfigure}{.5\textwidth}
  \centering
  \includegraphics[scale=0.4]{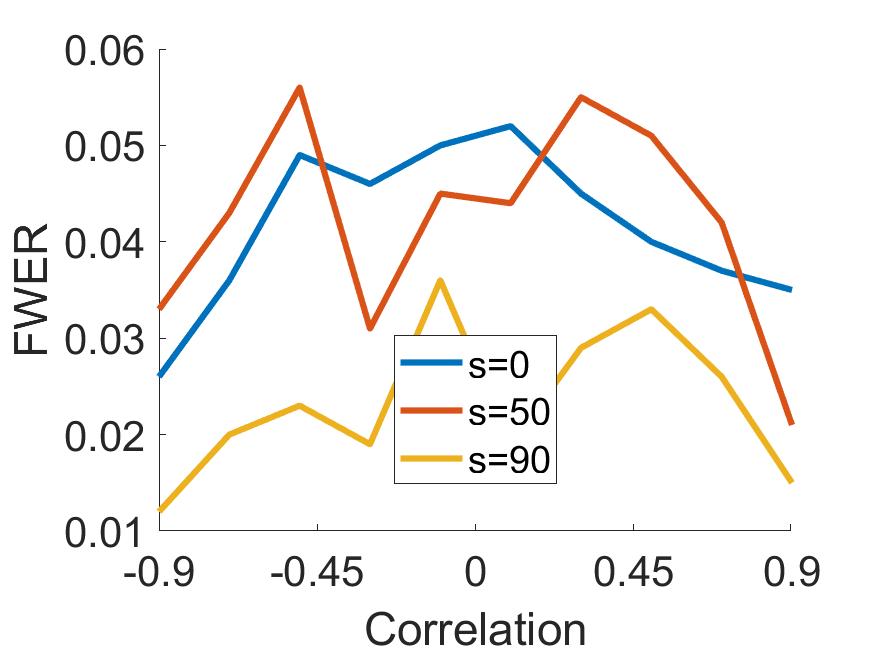}
  \caption{AR-1}
\end{subfigure}
\caption{FWER of the FACT method with the Simes-HC fusion rule, for correlated test statistics. The displays are as a function of the correlation coefficient.}
\label{fig:3}
\end{figure}

We examine the robustness of the FACT method to with the Simes-HC fusion rule to correlated test statistics. The reason is to understand how the FWER depends on the correlation structure of the tests.

We use the same simulation setup as in the previous section, with effect size $M=1$, and varying sparsity. Moreover, we sample the test statistics as
$$ X \sim \N(\mu,\Sigma),$$
where $\Sigma$ is a covariance matrix. We choose $\Sigma$ to be either a spiked covariance matrix $\Sigma = (1-\rho) I_n + \rho 11^\top$, where $1$ is the vector or all ones, or an autoregressive covariance matrix of order one (AR-1), $\Sigma_{ij} = \rho^{|i-j|}$. We let the correlation coefficient $\rho$ vary over the entire range where $\Sigma$ is non-negative definite. This includes both positive and negative correlation structures.  The results are displayed in Figure \ref{fig:3}. We average over 1000 Monte Carlo trials.  We observe that the method essentially controls the FWER for all correlation structures in this case.

\section{Data Analysis}
\label{g}

\begin{figure}
  \centering
  \includegraphics[scale=0.5]{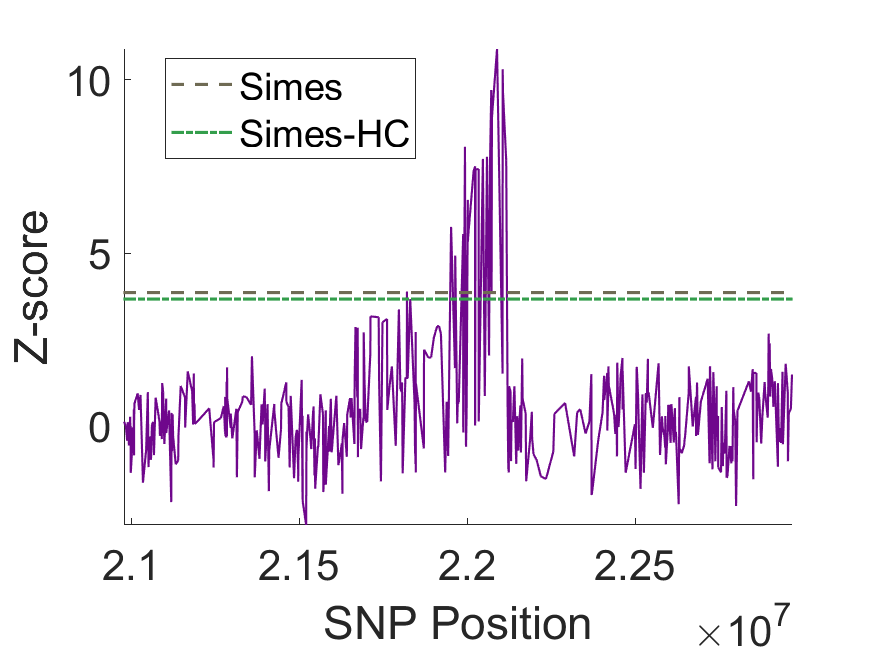}
\caption{Plot of Z-scores for association with coronary artery disease in the C4D  GWAS dataset, focusing on the 9p21.3 locus. The horizontal lines are the critical values of the closure of Simes' method (i.e., Hommel's method), and the FACT algorithm with the Simes-HC fusion rule. The Simes-HC method discovers a second associated SNP in the secondary cluster.}
\label{fig:h}
\end{figure}

We illustrate the FACT method on a Genome-Wide Association Study (GWAS) of coronary artery disease.  In the last decade, GWAS have become the backbone of modern medical genomics. Thousands of such studies have been performed, and have led to hundreds of novel associations between common traits and genetic variants \citep[see e.g.,][for a review]{visscher2012five}. Thus there is a great deal of interest in improving statistical inference via multiple testing in this area.

GWAS are a general and flexible type of genomic study to understand complex traits and diseases. In a typical GWAS, we collect a large number of cases and controls for a disease of interest, such as coronary artery disease. We also measure the genotypes of the samples for potentially hundreds of thousands of genetic variants known as Single Nucleotide Polymorphisms (SNPs). We then find the most significantly associated SNPs using multiple testing.

In this paper, we will llustrate the FACT method on the GWAS dataset from the C4D consortium for coronary artery disease genetics \citep{coronary2011genome}. This dataset contains about 500,000 genotyped SNPs, on  15,420 individuals with CAD (cases) of which 8,424 are Europeans and 6,996 are South Asians, along with 15,062 controls. Because out Simes-HC fusion method can more powerful than the closure of Simes' method (i.e., Hommel's method) for relatively dense alternatives, we focus on a subset of the SNPs that is already known to be enriched for CAD-related loci. Specifically, we focus on the neighborhood of the CDKN2A gene, at the 9p21.3 locus. 

This locus is known to be strongly associated with CAD from prior work, however the functional mechanism appears to be not fully known \citep[see e.g.,][]{harismendy20119p21, chen2014functional}. Therefore, it is of interest to better understand the local genetic architecture of this region. We focus on a  200 kilobase region centered at the position 21967752, which is the location of one of the most significant hits in the current study. There are $J = 452$ SNPs in this window.

We perform multiple testing with the closure of  Simes' method (i.e., Hommel's method) and the FACT method using the Simes-HC fusion. For the Simes-HC fusion, we set the  sparsity as $s = 0.1 J$. Hommel's method finds 24 significant loci, while the FACT with Simes-HC finds 25. The results displayed in Figure \ref{fig:h} show that most of the discoveries are in a contiguous window around the center of the region. 

Both Hommel's method and the fusion find significant discoveries in a second cluster, about 20 kilobases away from first one, but the fusion finds two loci there. This gives stronger evidence for the association of that cluster. Regarding interpretation, it is most likely that there are several loosely dependent SNPs associated with CAD in this region, so we think that it is valuable to have stronger evidence near the secondary locus.

Finally, we note that we expect the test statistics in this region to be dependent. However, we do not have access to the full dataset, but only to the list of $p$-values, and hence performing permutation methods is not feasible. Heuristically, the dependence should be positive, because of linkage disequilibrium between neighboring SNPs. Hence Simes' rule is expected to have the appropriate level. The same statement is less clear about the higher criticism, and this specific question deserves further study.

\section{Discussion of future work}
\label{disc}
In our work, symmetry was crucial for deriving efficient algorithms. However, there are important non-symmetric multiple testing methods, including weighted methods \citep{holm1979simple,dobriban2015optimal,fortney2015genome,dobriban2016weighted}, fixed-sequence procedures \citep{maurer1995multiple,westfall2001optimally}, and fallback procedures \citep{wiens2003fixed}. These control the FWER, because they are closures of weighted Bonferroni methods \citep{hommel2007powerful}. However, there is currently no general explanation for why they admit efficient algorithms. Can we derive such principles?  Can these principles lead to new multiple testing methods?

Second, closed testing methods have appealing power properties. It is known that any multiple testing method can be replaced by a closed testing method that rejects the same, and possibly more, hypotheses while controlling the FWER \citep{sonnemann1982allgemeine,sonnemann2008general,sonnemann1988vollstandigkeitssatze}. However, it is not known how to compute this closure efficiently. Are there conditions under which it can be computed in polynomial time?

In addition, it would be of major practical interest to study the case of \emph{dependent} $p$-values. 
 For this we would need a new probability inequality for local tests based on symmetrically distributed--or exchangeable--$p$-values, similar to Bonferroni's inequality for arbitrary dependence, and Simes' inequality for positive dependence \citep{goeman2014multiple}. Then we could apply the closure principle as in this paper. 

Moreover, there are many multiple testing methods that go beyond the closure principle. For instance, the it would be interesting to see when the methods based on the sequential testing principle \citep{goeman2010sequential} can be computed efficiently. Understanding these questions will help broaden our methods for multiple testing.

{\small
\setlength{\bibsep}{0.2pt plus 0.3ex}
\bibliographystyle{plainnat-abbrev}
\bibliography{references}
}
\end{document}